\tikzstyle{boldd} = [line width=1.5]
\renewcommand{\mathcal}[1]{\mathscr{#1}}
\renewcommand{\eps}{\varepsilon}
\newcommand{\E}{\operatorname{E}}
\newcommand{\err}{\mathsf{err}}
\title{
Towards Simpler Sorting Networks\\ and Monotone Circuits for Majority
}
\author{Natalia Dobrokhotova-Maikova$^1$ \and Alexander Kozachinskiy$^{2}$ \and Vladimir Podolskii$^3$}
\date{%
 $^1$Yandex\\
    $^2$IMFD \& CENIA, Chile\\
    $^3$Tufts University
}
\begin{document}

\maketitle

\begin{abstract}
	In this paper, we study the problem of computing the majority function by low-depth monotone circuits and a related problem of constructing low-depth sorting networks. We consider both the classical setting with elementary operations of arity $2$ and the generalized setting with operations of arity $k$, where $k$ is a parameter. For both problems and both settings, there are various constructions known, the minimal known depth being logarithmic. However, there is currently no known construction that simultaneously achieves sub-log-squared depth, effective constructability, simplicity, and has a potential to be used in practice. In this paper we make progress towards resolution of this problem.
	
	For computing majority by standard monotone circuits (gates of arity 2) we provide an explicit monotone circuit of depth $O(\log_2^{5/3} n)$. The construction is a combination of several known and not too complicated ideas. 
	
	For arbitrary arity of gates $k$ we provide a new sorting network architecture inspired by representation of inputs as a high-dimensional cube. As a result we provide a simple construction that improves previous upper bound of $4 \log_k^2 n$ to $2 \log_k^2 n$. We prove the similar bound for the depth of the circuit computing majority of $n$ bits consisting of gates computing majority of $k$ bits. Note, that for both problems there is an explicit construction of depth $O(\log_k n)$ known, but the construction is complicated and the constant hidden in $O$-notation is huge.
\end{abstract}

\section{Introduction}

A sorting network receives an array of numbers and outputs the same numbers in the non-decreasing order.
It consists of \emph{comparators}, each of which is given some fixed pair of array entries as an input and it swaps them if they are not in the non-decreasing order. The main parameters of a sorting network are the size, that is, the number of comparators, and the depth, that is, the number of layers in the network, where each layer consists of several comparators applied to disjoint pairs of variables. Sorting networks are a classical model in theoretical computer science with vast literature devoted to them, see, for example~\cite{Batcher68,AjtaiKS83,Leighton85,
	paterson1990improved,Parberry92,KahaleLMPSS95,seiferas2009sorting,BundalaZ14}, see also the Knuth's book~\cite{Knuth98} and the Baddar's and Batcher's book~\cite{BaddarB12}. Despite considerable efforts, still there are many open problems related to sorting networks.
In this paper, our main interest is the depth of  sorting networks. 

There is a related setting of computing 
majority function by monotone Boolean circuits.  Majority function receives as input a sequence of $n$ bits and outputs 1 if and only if more than a half of the inputs are 1's. Monotone Boolean circuits consist of AND and OR gates of fan-in 2. 
Constructing a monotone Boolean circuit for the majority function can only be easier than constructing a sorting network. This is because a sorting network can be transformed into a monotone Boolean circuit which computes majority and has the same depth.
Indeed, if we restrict inputs to $\{0,1\}^n$, then each comparator can be simulated by a pair of AND and OR gates (AND computes the minimum of two Boolean inputs and OR computes the maximum). And the majority is just the median bit of the sorted array.

For the depth of sorting networks, there are several simple and practical constructions of depth $\Theta(\log^2 n)$~\cite{Knuth98,Batcher68,Parberry92}.
A construction with $O(\log n)$ depth was given by Ajtai, Koml{\'o}s and Szemer{\'e}dy~\cite{AjtaiKS83} and is usually referred to as the AKS sorting network. Although their bound on the depth is asymptotically optimal, the construction is very complicated and impractical due to a large constant hidden in the O-notation.
There are some simplifications and improvements of this construction~\cite{paterson1990improved,seiferas2009sorting}, but the construction is still elaborate and is not practical.
As for the lower bounds, there is a folklore
$(2 - o(1))\log_2 n$  depth lower bound for networks sorting $n$ numbers. 
It was improved by Yao~\cite{Yao80} and later by Kahale et al.~\cite{KahaleLMPSS95} with the current record about $3.27 \log_2 n$. 

As we discussed above, any construction for a sorting network translates to a monotone circuit for majority of the same depth. In particular, we get an  $O(\log n)$-depth monotone circuit for majority from the AKS sorting network. Yet again, the resulting circuit has the same disadvantages as the AKS construction. But in contrast to sorting networks, there is an alternative construction of a monotone depth-$O(\log n)$ Boolean circuit for majority due to Valiant~\cite{Valiant84}. His construction is simple and has a reasonable constant hidden in the O-notation, but it is randomized. It was partially derandomized and made closer to practice by Hoory, Magen and Pitassi~\cite{hoory2006monotone}. But still all known fully deterministic constructions that are simple and practical  are of depth $\Theta(\log^2 n)$.
Thus, there is an open problem for both sorting networks and monotone circuits for majority to come up with simple and deterministic construction of sub-log-squared depth.

One potential approach to this is to consider sorting networks with comparators that have $k > 2$ inputs. We will call them $k$-sorting networks.
They appear in the literature since 70s, the setting is mentioned already in the Knuth's book~\cite[Problem 5.3.4.54]{Knuth98}, followed by numerous works~\cite{TsengL85,ParkerP89,BeigelG90,NakataniHAT89,CypherS92,LeeB95,ShiYW14,gao1997sloping,zhao1998efficient}.
They are usually studied to better understand the structure of ordinary sorting networks (for example, a version of AKS sorting network with improved constant relies on $k$-sorting network in intermediate constructions~\cite{Chvatal92}).
In particular, $k$-sorting networks are closely related to recursive constructions of sorting networks. Having a good construction of a $k$-sorting network, one can apply it to its own comparators, getting a construction with smaller $k$, until eventually $k$ becomes 2, and we get an ordinary sorting network.

Parker and Parbery~\cite{ParkerP89} constructed a simple and potentially practical $k$-sorting network of depth $\leq 4 \log^2_k n$ (in case when $n$ is an integral power of $k$).
At the same time, as  Chv{\'a}tal shows in his lecture notes~\cite{Chvatal92}, 
the AKS sorting network also generalizes to this setting, giving a construction of depth $O(\log_k n)$. However, as with the AKS sorting network itself, this construction is complicated and impractical. 
So the search for simple constructions continues.
As for the lower bounds, any $k$-sorting network with $n$ inputs must have depth at least $\log_k n$ because otherwise outputs cannot be connected to all $n$ inputs. Dobokhotova-Maikova et al.~\cite{Dobrokhotova-MaikovaKP23} improved this bound to roughly $2 \log_k n$. They also found optimal values of $k$ for small values of depth $d$. More specifically, for sorting networks of depth $d=1,2$ they show that $k$ cannot be smaller than $n$, for $d=3$ the optimal value is $k = \left\lceil \frac n2 \right\rceil$ and for $d=4$ the optimal value is $k = \Theta(n^{2/3})$. These results indicate that small depth $k$-sorting networks are not enough for iterative approach to sub-log-squared sorting network and we need either good $k$-sorting network constructions of depth greater than 4, or additional ideas. 

Just as with sorting networks, we can consider circuits for majority function that are constructed from \emph{threshold} gates of fan-it at most $k$. A threshold gate is a Boolean function that first sorts its input bits in the non-decreasing order, and then outputs the $i$th one from the beginning, for some fixed $1\le i \le k$. For $k = 2$, AND and OR are the only two threshold functions. In general, there are $k$ threshold functions of fan-in $k$. By taking one copy of each, we get a comparator of arity $k$.
Thus, as in the case $k = 2$, a $k$-sorting network can be transformed into a circuit of the same depth which computes majority and consists of threshold gates of fan-in $k$. In other words, constructing a $k$-sorting network can only be harder than constructing a circuit for majority with threshold gates of fan-in $k$.

There is a line of work, initiated by Kulikov and Podolskii~\cite{KulikovP19}, which addresses the following question: given $d$ and $n$, what is the minimal $k$ for which there exists a circuit with threshold gates of fan-in $k$, which has depth $d$ and computes majority on $n$ bits? The paper~\cite{KulikovP19} shows that, up to a polylogarithmic factor, $k \ge n^{14/(7d + 6)}$. In subsequent works, special attention was given to the case $d = 2$. In this case, the lower bound of~\cite{KulikovP19} is $k\ge n^{14/20}$. It was improved to $k \ge n^{4/5}$ by Engels et al.~\cite{EngelsGMR20}. Then a linear lower bound $k \ge n/2 - o(n)$ was obtained by Hrubes et al.~\cite{HrubesRRY19}. An upper bound $k\le 2n/3 + O(1)$ was given in~\cite{Posobin17}. Let us also mention an upper bound $k\le n - 2$ for circuits that only use majority gates~\cite{kombarov2017,amano2017}.

Now, for $d \ge 3$ the situation is less clear. For $d = 3$, the paper~\cite{KulikovP19} gave an upper bound $k = O(n^{2/3})$. In turn, their lower bound in this case is of order $n^{14/27}$. We are not aware of any non-trivial upper bound for $d\ge 4$.

\paragraph*{Our results.} In this paper we make progress towards better constructions of monotone circuits for majority and sorting networks.

First, we give an explicit and reasonably simple construction of a monotone circuit for majority of depth $O(\log^{5/3} n)$. 
\begin{theorem} \label{thm:main1}
	There is a polynomial time constructable monotone circuit for majority of polynomial size and depth $O(\log^{5/3} n)$.
\end{theorem}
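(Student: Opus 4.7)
The plan is to compose two explicit building blocks. As the \emph{outer} block I will take a depth-$O(\log_k^2 n)$ monotone circuit for majority of $n$ bits whose gates compute majority of $k$ bits, for a parameter $k$ to be chosen later; such a circuit is given by Parker and Parberry's $k$-ary sorting network~\cite{ParkerP89} (by reading off the median output coordinate), or, with a better constant, by the $2\log_k^2 n$ bound for $\mathrm{Maj}_k$-circuits announced in the abstract and proved later in this paper. As the \emph{inner} block I will take an explicit polynomial-size monotone AND/OR circuit for majority of $k$ bits of depth $O(\log k)$, supplied by the Hoory--Magen--Pitassi derandomization of Valiant's construction~\cite{hoory2006monotone}.

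I will then substitute a copy of the inner subcircuit in place of each $\mathrm{Maj}_k$-gate in the outer circuit. The result is an explicit monotone AND/OR circuit for majority of $n$ bits of depth
\[
O(\log_k^2 n) \cdot O(\log k) \;=\; O\!\left(\frac{\log^2 n}{\log k}\right).
\]
Finally I will optimize $k$: setting $k := 2^{\lceil \log^{1/3} n \rceil}$, so that $\log k = \Theta(\log^{1/3} n)$, yields the claimed $O(\log^{5/3} n)$ depth. At this choice each inner subcircuit has size polynomial in $k$, i.e.\ $2^{O(\log^{1/3} n)} = n^{o(1)}$, and the outer circuit has only polynomially many gates, so the overall size is polynomial and the construction is polynomial-time constructible.

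The main technical point I expect, rather than a conceptual obstacle, is uniformity in $k$: I need the HMP theorem applied with input length $k$ to give an explicit, polynomial-in-$k$-size, monotone $\mathrm{Maj}_k$-circuit of depth $O(\log k)$, with all constants independent of $n$, for every $k$ of the form $2^{\Theta(\log^{1/3} n)}$. This is a careful reading of their statement rather than a new argument. A secondary nuisance is rounding $n$ to an integer power (or a convenient multiple) of $k$ so that the outer $O(\log_k^2 n)$ construction applies verbatim; this is standard and introduces only constant overhead. Once these points are checked, the depth, size, and constructibility bounds follow by direct composition, and the advantage over the naive $\log^2 n$ Batcher bound comes entirely from the fact that the inner block, for the moderate values of $k$ we use, is already logarithmic rather than log-squared.
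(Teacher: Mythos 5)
Your depth arithmetic is fine, but the argument rests on an inner building block that does not exist with the properties you attribute to it. You take as given that Hoory--Magen--Pitassi supply an explicit, polynomial-time constructible, polynomial-size monotone AND/OR circuit for $\mathrm{MAJ}_k$ of depth $O(\log k)$ with absolute constants. If such an inner block were available, Theorem~\ref{thm:main1} would follow immediately by taking $k = n$: you would obtain depth $O(\log n)$, strictly better than $O(\log^{5/3} n)$, and the outer $k$-sorting network would be superfluous. The paper is explicit that this is not the case: Valiant's $O(\log n)$-depth monotone construction is randomized, Hoory, Magen and Pitassi only \emph{partially} derandomize it, and ``still all known fully deterministic constructions that are simple and practical are of depth $\Theta(\log^2 n)$.'' (The one genuinely deterministic $O(\log n)$-depth option is via the AKS sorting network, which the paper deliberately excludes from the toolbox because of its impractical constants and complexity --- and it would again make the theorem trivial but pointless.) So the main technical point you flag as ``a careful reading of their statement'' is in fact the whole problem: the inner block has to be manufactured, not cited.

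The paper's actual proof is organized around precisely this obstacle. It first constructs a \emph{randomized} depth-$O(\log n)$ monotone circuit for $\mathrm{MAJ}_n$ using only $O(\log^3 n)$ random bits and error probability $2^{-\Omega(\log^2 n)}$ (Lemma~\ref{lem:start}, using $t$-wise independent hash functions to recycle randomness across the layers of Valiant's tree, following ideas of Cohen et al.). It then interleaves two operations. The first is a \emph{downward self-reduction}, which is your composition step: plug a circuit for $\mathrm{MAJ}_{2k}$ into each gate of a depth-$O(\log_k^2 n)$ $\mathrm{MAJ}_k$-circuit obtained from a $k$-sorting network. The second is a \emph{brute-force derandomization}: evaluate the circuit on all $2^r$ random seeds in parallel and feed the outputs into an explicit approximate-majority circuit, at a depth cost of $O(r)$. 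The first self-reduction (with $\log k = \Theta(\sqrt{\log n})$) is applied precisely to shrink the random-bit budget from $O(\log^3 n)$ down to $O(\log^{3/2} n)$ so that the brute-force step is affordable; a second self-reduction (with $\log k = \log^{2/3} n$) then balances the resulting depth to land at $O(\log^{5/3} n)$. The exponent $5/3$ thus arises from balancing three operations against one another, not from the ratio $\log^2 n / \log^{1/3} n$ that your calculation produces. In short, the composition-with-a-$k$-sorting-network idea is a shared ingredient, but the essential content of the proof is how to get any usable inner block at all, and that requires the managed-derandomization machinery rather than a citation to HMP.
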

Our proof combines several relatively simple steps. We start with partial derandomization of Valiant's construction using some ideas from the paper by Cohen et al.~\cite{CohenDIKMRR13}. Next we apply to the resulting randomized circuit two operations several times. The first of them is a brute-force derandomization, that searches through all possible random bits of the randomized circuit. The second one is a composition with a $k$-sorting network of depth $O(\log^2_k n)$. For such a network we can use either the construction of Parker and Parbery~\cite{ParkerP89}, or, for the better constant, our next result.

In our second result we come up with the new architecture for $k$-sorting networks. As an application of this architecture we construct a $k$-sorting network of depth $2 \log_k^2 n$, improving the constant compared to the result of~\cite{ParkerP89}. More precisely, we prove the following theorem.

\begin{theorem} \label{thm:sorting_upper_bound}
	For any $n$ and for any $k$ such that $\log k = \omega(\log \log n)$ (or, to put it differently, $k$ is growing faster than any $\polylog(n)$), there exists a $k$-sorting network of depth at most $(2+o(1))\log^2_k n$.
\end{theorem}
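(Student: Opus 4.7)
The plan is to view the $n$ inputs as vertices of a $d$-dimensional hypercube $[k]^d$, where $d = \lceil \log_k n \rceil$, padding with dummy values so that the total number of positions is exactly $k^d$. The payoff of this viewpoint is that a single layer of $k$-ary comparators can implement a \emph{dimension-$i$ sort}: in parallel, sort the $k$ elements along every line that varies only in the $i$-th coordinate. Thus any sorting algorithm phrased as a sequence of dimension sorts has depth equal to the number of such sorts.

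The network is built by a recursion on dimension. Let $S(d)$ be the depth of sorting a $d$-cube. We sort by first, in parallel, recursively sorting the $k$ slices $\{(x_1, \ldots, x_d) : x_d = c\}$ (each a $(d-1)$-cube), and then running a \emph{merge step} $M(d)$ that combines the $k$ sorted slices into a single $d$-cube sorted in lex order. This yields $S(d) = S(d-1) + M(d)$, and if $M(d) \le (4 + o(1))d$, telescoping gives $S(d) \le (2 + o(1)) d^2 = (2 + o(1)) \log_k^2 n$, as desired.

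The merge step is the technical core. Given $k$ sorted $(d-1)$-cubes, the idea is a $k$-ary Batcher-style odd-even merge: split each sorted sequence into $k$ stride-subsequences of length $k^{d-2}$, regroup them into $k$ independent sub-merge instances, recurse on dimension, and then run a bounded number of dimension-line sorts to fix localized misorderings. Correctness is verified by the 0-1 principle: on $\{0,1\}$ inputs, the recursive merge leaves the set of 1-positions ``almost'' in staircase form, and a few dimension sorts complete the staircase. The goal is that only about $4$ dimension-line sorts per recursion level are needed, giving $M(d) = 4d + O(1)$.

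The main obstacle is achieving the constant $2$ in $(2 + o(1)) \log_k^2 n$: even a single extra dimension sort per recursion level would push $M(d)$ to $(4+\varepsilon)d$ and reproduce (or worsen) the $4 \log_k^2 n$ bound of Parker and Parberry~\cite{ParkerP89}. Pinning down the correct cleanup therefore requires a careful $0/1$ combinatorial analysis of the merge invariant. Finally, the hypothesis $\log k = \omega(\log \log n)$, equivalently $d = o(\log n / \log \log n)$, is exactly what is needed to absorb the $O(d)$ additive overhead from the merge recursion and from padding up to $k^d$ into the $o(1)$ factor in the final depth bound.
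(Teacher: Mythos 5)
Your high-level skeleton---hypercube view, layers of dimension sorts, a recursion $S(d) = S(d-1) + M(d)$---is the \emph{standard} merge-sort-style decomposition, and it is precisely the one the paper ascribes to Parker and Parberry and identifies as the bottleneck. You then place the entire weight of the argument on the claim $M(d) \le 4d + O(1)$ for a $k$-ary merge of $k$ already-sorted $(d-1)$-cubes, but you never construct such a merge; you only gesture at ``a $k$-ary Batcher-style odd-even merge'' and acknowledge yourself that even a single extra dimension sort per level would collapse the bound back to $4\log_k^2 n$. That unproven merge bound is exactly the gap, and it is not a detail one can wave away: to this paper's authors, the expense of merging $k$ fully-sorted blocks with $k$-ary comparators is the reason the naive recursion does \emph{not} go below $4\log_k^2 n$.

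The paper's actual route is different in a structural way. Rather than fully sorting sub-blocks and then paying for a merge, it introduces the notion of an \emph{$s$-sorted} array (sorted except inside one interval of length $\le s$) and maintains only that weaker invariant throughout. Lemma~\ref{lem:mergin-s-sorted} merges $p$ arrays that are each $s$-sorted, using a single layer of $k$-comparators arranged in a $t\times p$ grid with $tp\le k$, producing an $(sp + O(np/t))$-sorted array; iterating it with $t = \lfloor\sqrt{k}\rfloor$ (Lemma~\ref{lem:s-sorting}) makes the whole input $O(n\log_k n/\sqrt{k})$-sorted in only $\approx 2\log_k n$ layers. The recursion is then on the length of the unsorted interval, not on the cube dimension, and the subtlety that the interval's position is unknown is handled by the two-phase overlapping-block decomposition plus an ad hoc single-layer cleanup rather than a second recursive call. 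Solving that recursion (Lemma~\ref{lem:recur}) gives $(2+o(1))\log_k^2 n$. So the idea you would need to supply is not a tighter Batcher merge but a change of invariant: stop insisting that sub-blocks are fully sorted before combining them. Without that, your proposal reduces the theorem to an open claim about $k$-ary merging that the paper deliberately avoids having to prove.
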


The key idea behind this construction is to represent the input array as a hypercube of high dimension and sort various sections of this cube. We note that the idea of representing an array as a multidimensional structure is not new, for example, Leighton~\cite{Leighton85} in his ColumnSort represented the array as a two dimensional table. However, in our construction it is important that we use the dimension greater than $2$, since we use the fact that the sections of the cube have non-trivial intersection. On the conceptual level, the main novelty in our construction is the notion of $s$-sorting. We call the array $s$-sorted if the whole array is sorted correctly apart from some interval of length at most $s$. Most (if not all) log-squared-depth sorting network constructions adopt the divide and conquer strategy. The $O(\log_k^2 n)$-depth construction in~\cite{ParkerP89} is not an exception, to sort an array of size $n$, they split it into subarrays of size $n/k$, sort them recursively and merge them afterward. However, merging $k$ subarrays using $k$-sorting network is relatively expensive. To improve over previous construction, we work with $s$-sorted subarrays instead. We show how to merge them effectively (using the hypercube idea) and then show how we can build a recursive construction based on them.

To additionally illustrate applications of our construction, we consider constant depth sorting networks and circuits for majority. We show that there is a $\MAJ_k$-circuit for $\MAJ_n$ for $k = O(n^{3/5})$. For a second application we address the question of $k$-sorting networks for $k = O(n^{1/2})$. In~\cite{Knuth98} Knuth posed a problem of constructing a minimal depth $k$-sorting network for the input of size $k^2$. Parker and Parbery~\cite{ParkerP89} gave a construction of depth $9$. We improve this to depth $8$ at the cost of using comparators of size $O(k)$ for $k^2$ input size.

The rest of the paper is organized as follows. In Section~\ref{sec:prelim} we provide necessary preliminary information. In Section~\ref{sec:maj-circuit} we construct a monotone circuit for majority of depth $O(\log^{5/3}n)$. In Section~\ref{sec:k-sorting-network} we provide a new construction of $k$-sorting networks and deduce the corollaries. In Section~\ref{sec:conclusion} we discuss some open problems.


\section{Preliminaries}
\label{sec:prelim}

We use the standard notation $[n] = \{1,\ldots, n\}$. We sometimes omit the base of the logarithms, by default we assume that the base is $2$.

\subsection{Sorting Networks}

A depth-$d$ $k$-sorting network with $n$ inputs consists of $d + 1$ arrays $A_1, \ldots, A_{d+1}$, each of length $n$. Between any two arrays $A_i$  and $A_{i+1}$ there is a \emph{layer of comparators} (the first layer is between $A_1$ and $A_2$, the second layer is between $A_2$ and $A_3$, and so on). A layer of comparators is a partition of the set $\{1, 2, \ldots, n\}$ into subsets of size at most $k$ called \emph{comparators}. 

The input is given in an array $A_1$ and all other arrays are computed by the network one by one in the following way.
If $S \subseteq [n]$ is a comparator from the $i$th layer, then it  is applied to the entries $\{A_i[j] \mid j \in S\}$. It sorts their values in the non-decreasing order and puts the results into the entries $\{A_{i+1}[j] \in A_{i+1} \mid i \in S\}$. We say that a network is \emph{sorting} if for any input $A_1$ the array $A_{d+1}$ is sorted. 

We reserve the name \emph{sorting network} for $2$-sorting networks.

It is well known that to check that the sorting network sorts all possible inputs, it is enough to check that it sorts just $0/1$-inputs.
\begin{lemma}[Zero-one principle~\cite{Knuth98}]
	A network with $n$ inputs sorts all integer sequences in the non-decreasing order if and only if it sorts all sequences from $\{0,1\}^n$ in the non-decreasing order.
\end{lemma}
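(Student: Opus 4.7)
The plan is to prove the zero-one principle in the standard way: one direction is trivial (if the network sorts every integer sequence, it sorts every $0/1$ sequence), so I would focus on the contrapositive of the other direction. Given an integer input $x = (x_1, \ldots, x_n)$ on which the network produces an unsorted output $y = (y_1, \ldots, y_n)$, I would construct a $0/1$ input on which the network also fails.

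The key ingredient is a commutativity property: for any non-decreasing map $f: \mathbb{Z} \to \{0,1\}$ applied coordinate-wise, and any single $k$-comparator $C$, one has $C(f(a_1), \ldots, f(a_k)) = (f(b_1), \ldots, f(b_k))$, where $(b_1, \ldots, b_k) = C(a_1, \ldots, a_k)$. This is just the statement that sorting a tuple and applying a monotone $f$ to each entry commute, since both operations yield the non-decreasing rearrangement of $(f(a_1), \ldots, f(a_k))$. Iterating this observation layer by layer through the entire network, I obtain that the output of the network on input $(f(x_1), \ldots, f(x_n))$ equals $(f(y_1), \ldots, f(y_n))$.

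To finish, since $y$ is not sorted, there exist indices $i < j$ with $y_i > y_j$. I pick any integer threshold $t$ with $y_j < t \le y_i$ and define $f_t(z) = 0$ if $z < t$ and $f_t(z) = 1$ otherwise. This $f_t$ is non-decreasing, the input $(f_t(x_1), \ldots, f_t(x_n))$ lies in $\{0,1\}^n$, and by the commutativity above the network's output on it has a $1$ in position $i$ and a $0$ in position $j > i$, hence is not sorted. This contradicts the assumption that the network sorts all $0/1$ inputs, completing the proof.

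There is no real obstacle here; the only step that is not purely bookkeeping is the commutativity lemma for a single $k$-comparator, which is immediate from the definition of sorting. I would just make sure to state it explicitly for general arity $k$ (rather than only for $k = 2$, which is sometimes the way the principle is stated in textbooks), since the lemma in the paper is formulated at the level of networks with arbitrary comparators.
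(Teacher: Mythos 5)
The paper does not include its own proof of this lemma; it simply cites Knuth. Your proof is correct and is precisely the classical argument from Knuth's book (commuting a single comparator with a coordinate-wise monotone map, then thresholding at an inversion), stated appropriately for arbitrary arity $k$, so there is nothing to add.
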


By this principle, when constructing sorting networks, we can assume that each input cell receives either $0$ or $1$.

The following simple observation will be useful to us.
\begin{lemma} \label{lem:sorting-monotonicity}
	If $t$ largest or $t$ smallest entries in the array are positioned correctly (i.e., in the last $t$ cells and in the first $t$ cells, respectively), then after the application of several comparators they are still positioned correctly.
\end{lemma}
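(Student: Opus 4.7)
The plan is to reduce the lemma to the case of a single comparator by induction on the number of comparators applied, and to handle the case of $t$ smallest entries by reversing the order of the array (which is entirely symmetric to the case of $t$ largest entries). The core task is then to show that one comparator, acting on a subset $S \subseteq [n]$, preserves the property that positions $\{n-t+1,\ldots,n\}$ contain the top $t$ values of the array.

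For this, I would partition $S$ as $S_{\text{top}} = S \cap \{n-t+1,\ldots,n\}$ and $S_{\text{bot}} = S \setminus S_{\text{top}}$, with $s = |S_{\text{top}}|$, and rely on two simple observations. First, since every position in $S_{\text{top}}$ lies strictly above $n-t$ and every position in $S_{\text{bot}}$ lies at most $n-t$, the set $S_{\text{top}}$ is exactly the block of the $s$ highest-indexed positions of $S$. Second, by the inductive hypothesis every value currently at a position in $S_{\text{top}}$ is among the top $t$ values of the array and is therefore at least as large as every value currently at a position in $S_{\text{bot}}$; consequently, the multiset of the $s$ largest entries of $S$ coincides with the multiset of entries currently sitting at $S_{\text{top}}$. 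Combining these, when the comparator sorts $S$ it places the $s$ largest entries of $S$ back into the positions $S_{\text{top}}$, so $S_{\text{top}}$ retains exactly the multiset of values it had before, while positions outside $S$ are untouched. Hence the top $t$ values still occupy the last $t$ cells after the comparator is applied.

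I do not expect a real obstacle here; the whole argument hinges on one observation about how a comparator acts when one part of its input weakly dominates the other. The only mildly delicate point is the case of ties between entries of $S_{\text{top}}$ and entries of $S_{\text{bot}}$, but since the multiset at $S_{\text{top}}$ has cardinality exactly $s$ and weakly dominates the multiset at $S_{\text{bot}}$, the identification of the $s$ largest entries of $S$ with the multiset at $S_{\text{top}}$ still goes through as an equality of multisets.
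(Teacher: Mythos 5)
Your proof is correct and follows essentially the same line as the paper's own (terse) argument: reduce by induction to a single comparator, then observe that the comparator positions inside the top-$t$ block are exactly the highest-indexed positions of $S$ and already hold values weakly dominating those at the remaining positions of $S$, so sorting $S$ leaves the top-$t$ block's multiset unchanged. The paper states this in one sentence; you have merely spelled out the same observation with care about the multiset equality and ties, which is a welcome clarification but not a different approach.
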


\begin{proof}
    We can show by induction on $i$ that the smallest and the largest entries do not move if they are already positioned correctly. The key observation is that if some of these entries are inputted into one of the comparators $S$, they will not be moved.
\end{proof}

\subsection{From Sorting Networks to Majority Circuits}

We use the standard notion of Boolean circuits (see, e.g.~\cite{Jukna12}). As inputs, we allow Boolean variables and Boolean constants 0 and 1. The size of the circuit is the number of gates in it.

Given a $k$-sorting network we can get a circuit computing majority from it. More specifically, restrict the inputs to the network to $\{0,1\}^n$ and consider one $k$-comparator $S$. Note that its $k$th output is equal to 1 if and only if there is at least one 1 in the input. In other words, the $k$th output is equal to $\OR$ of input bits. Its $(k-1)$th output is equal to 1 if and only if there are at least two 1s in the input. More generally, it is easy to see that the $(k-i)$th output of the $k$-comparator outputs a threshold function
$$
\THR_k^{i}(x) = \begin{cases}
    1 & \text{if } |x| > i  ,\\
    0 & \text{otherwise,}
\end{cases}
$$
where $|x|$ denotes the weight of the vector $x \in \{0,1\}^k$, that is, the number of 1s in it. 
We reserve the notation $\MAJ_k(x)$ for the function $\THR_k^{k/2}(x)$.

We can substitute each comparator in the network by $k$ majority functions. Note that by adding several constants 0 or 1 as inputs to the gate we can convert any $\THR_k^{i}$ function into $\MAJ_{k'}$ with $k'\leq 2k$.

Now, it remains to observe that the median bit in the output array computes exactly $\MAJ_n$. 
Thus, as a result, we get the following lemma.
\begin{lemma} \label{lem:sorting-to-maj}
    Any $k$-sorting network of depth $d$ and size $s$ can be effectively converted into a circuit of depth $d$ and size $ks$ consisting of $\MAJ_{2k}$ gates and computing majority. In the case $k=2$, we get just a monotone circuit consisting of $\AND$ and $\OR$.
\end{lemma}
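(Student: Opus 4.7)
The plan is to make rigorous the argument sketched in the paragraphs preceding the lemma. First, invoking the zero-one principle, I restrict attention to inputs in $\{0,1\}^n$: any Boolean circuit I build that agrees with $\MAJ_n$ on $\{0,1\}^n$ is by definition a circuit for majority, so no generality is lost.

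Second, for each $k$-comparator in the sorting network, say operating on a subset $S \subseteq [n]$ with $|S| = k' \le k$ and writing its sorted output back into the cells $S$ of $A_{i+1}$, I would explicitly identify the function computed at each output position. As observed in the text, the $j$-th smallest output (for $j = 1, \ldots, k'$) is exactly $\THR_{k'}^{k'-j}$ applied to the $k'$ input bits of the comparator, because the $j$-th smallest of a $0/1$-vector is $1$ iff there are more than $k'-j$ ones. I replace the comparator by $k'$ threshold gates in parallel, one per output position. This increases the gate count by a factor of at most $k$ and does not alter the layered depth, so the resulting threshold circuit has depth $d$ and size at most $ks$.

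Third, I would convert each $\THR_{k'}^i$ gate, for $0 \le i < k'$, into a $\MAJ_{k''}$ gate with $k'' \le 2k$ by padding with Boolean constants. Since $\THR_{k'}^i(x) = 1$ iff $|x| > i$ and $\MAJ_{k''}(y) = 1$ iff $|y| > k''/2$, I pad $x$ with extra $0$s or $1$s so that the new threshold $\lfloor k''/2 \rfloor + 1$ matches $i+1$: if $i \ge k'/2$ I append $2i - k' + 1$ zeros (or $2i - k' + 2$, to make the length even if needed), and if $i < k'/2$ I append $k' - 2i - 1$ ones. A short case check shows that in both cases the resulting fan-in is at most $2k' \le 2k$. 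Hardwired constants contribute no gates and no depth, so the depth $d$ and size bound $ks$ are preserved.

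Finally, I designate the cell $A_{d+1}[\lceil n/2 \rceil]$ as the output: since $A_{d+1}$ is sorted in non-decreasing order for every Boolean input, its median bit is $1$ iff strictly more than half of the $n$ input bits are $1$, which is exactly $\MAJ_n$. For the special case $k=2$, the only threshold functions are $\THR_2^0 = \OR$ and $\THR_2^1 = \AND$, so no padding is needed and the resulting circuit is automatically monotone, giving the last sentence of the lemma. No step here presents a real obstacle; the lemma is essentially a careful packaging of observations already made in the text, and the only mildly delicate point is the fan-in bookkeeping in the padding step.
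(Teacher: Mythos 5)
Your proposal is correct and follows essentially the same approach as the paper, which derives the lemma from exactly the observations you formalize: comparator outputs are threshold functions, thresholds become majorities by padding with constants, and the median output cell computes $\MAJ_n$. Your explicit fan-in bookkeeping for the padding step is a welcome addition of rigor to what the paper states in passing.
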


\subsection{Approximate Majority}

By $\eps$-approximate majority function $\MAJ^{\eps}_n$ we denote the partial function that outputs $\MAJ_n$ of its input but is defined only on the inputs where the fraction of ones in it is bounded away by $\eps$ from $1/2$. 

We need the following known result. 
\begin{theorem}[\cite{viola2009approximate}] \label{thm:approx_maj}
	For any constant $\eps > 0$, one can compute $\MAJ_n^\eps$ explicitly by a monotone circuit of size $\poly(n)$ and depth $O(\log n)$.
\end{theorem}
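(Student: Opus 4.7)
The plan is to build a monotone circuit of depth $O(\log n)$ that iteratively amplifies the gap $|p - 1/2|$ (where $p$ denotes the current fraction of $1$-bits across a layer) from its initial value $\eps$ up to essentially $1/2$, and then collapses the result to a single bit via a tree of $\OR$'s or $\AND$'s. Each layer will be wired up by an explicit sampler, so that a constant-fan-in monotone $\MAJ_t$ at each output concentrates the $1$-density away from $1/2$.

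Concretely, I would fix an explicit sampler $S \colon [n] \times [t] \to [n]$ with the guarantee that for every set $A \subseteq [n]$ of density $\rho$, only a $\delta$-fraction of indices $i \in [n]$ observe a neighbourhood with density outside $[\rho - \gamma, \rho + \gamma]$. Each output bit $y_i$ of a layer is set to $\MAJ_t(x_{S(i,1)}, \ldots, x_{S(i,t)})$. Since $\mathrm{Bin}(t,p)$ exceeds $t/2$ with probability strictly farther from $1/2$ than $p$ is, whenever $p$ is bounded away from $1/2$, this pushes the gap by a constant multiplicative factor, modulo a loss governed by $\delta$ and $\gamma$. Explicit samplers obtained from short random walks on constant-degree expanders achieve $\delta$ exponentially small in $t$, have $\mathrm{poly}(n)$ description size, and can be wired into the circuit in polynomial time.

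Since $\eps$ is constant, $O(1)$ initial amplification layers drive $|p - 1/2|$ up to a universal constant; a further $O(\log n)$ layers, with samplers of slowly growing fan-in $t$ to shrink $\delta$ as the slack in the promise decreases, drive the density to either $1 - 1/\mathrm{poly}(n)$ or $1/\mathrm{poly}(n)$; and a final balanced binary tree of $\OR$ (respectively $\AND$) of depth $O(\log n)$ reads off the single output bit. Replacing each $\MAJ_t$ gate by its standard monotone depth-$O(1)$ formula over $\{\AND, \OR\}$ yields the claimed monotone circuit of polynomial size and total depth $O(\log n)$.

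The main obstacle is the quantitative balance between the fan-in $t$, the sampler's failure rate $\delta$, and the per-layer amplification factor, so that error probabilities never accumulate enough to spoil the worst-case promise. The samplers must become more accurate as the density approaches $0$ or $1$, because the slack in the promise disappears; but standard tail bounds for random walks on expanders deliver exactly the needed parameters at each stage while keeping overall fan-in and depth under control, giving a fully explicit $\mathrm{poly}(n)$-size construction.
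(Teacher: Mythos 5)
Your amplification-by-explicit-samplers strategy is a natural route and it echoes what this paper itself does, with $t$-wise independent hashing in place of expanders, in its proof of Lemma~\ref{lem:start}; but there the argument produces a \emph{randomized} circuit, and the paper applies the very theorem you were asked to prove as a black box at the end. The cited construction of Viola is of a different shape: it is a constant-depth monotone formula with unbounded fan-in (an explicitly wired OR of ANDs of ORs), and the claimed $O(\log n)$ depth with fan-in two comes from expanding the unbounded-fan-in gates into binary trees. So even if your route worked, it would be a genuinely different proof.

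The step that breaks in your sketch is the last one. You drive the density of wrong bits in the final amplification layer to $1/\mathrm{poly}(n)$ and then apply a balanced OR (resp.\ AND) tree. But $1/\mathrm{poly}(n)$ is not $0$: on a ``no'' input, if the final layer contains even one $1$, the OR tree outputs $1$, which is wrong; the tree is sound only if the final layer contains no $1$s at all. No polynomial-size, logarithmic-fan-in sampler-based wiring can deliver that. A sampler $S\colon[M]\times[t]\to[N]$ with failure $\delta$ bounds only the \emph{fraction} of bad outputs, so once $\delta M\ge 1$ a bad output may survive; and for an expander-walk sampler $\delta\approx e^{-\Theta(t)}$ while $M\approx N\cdot D^{\Theta(t)}$ for a degree-$D$ expander, so $\delta M$ is never below $1$. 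More basically, for \emph{any} fixed family of $M$ size-$t$ subsets of $[N]$ with $t\le N/2$ and any density $\rho\ge t/(2N)$, one can choose a $\rho$-dense set $A\subseteq[N]$ that hits one of your subsets in at least $t/2$ positions (place $\lceil t/2\rceil$ elements of one subset into $A$ and pad), so the bad density cannot be pushed below roughly $(\log n)/n$, let alone to zero. The gadget that collapses a bounded-away-from-$1/2$ bias to an exactly correct single bit is precisely where the known constant-depth constructions do real work, and a balanced tree of fan-in-two ORs cannot substitute for it.
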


\subsection{$t$-Wise Independent Hash Functions}

We need the notion of $t$-wise independent hash functions.

\begin{definition} For integers $N$ and $t$ such that $t \leq N$, a family of function $\mathcal{H} = \{h \colon [N] \to [N]\}$ is $t$-wise independent if for all distinct $x_1,\ldots, x_t \in [N]$ the random variables $h(x_1), \ldots, h(x_t)$ are independent and uniformly distributed in $[N]$, when $h \in \mathcal{H}$ is drawn uniformly. 
\end{definition}

\begin{theorem}[\cite{Vadhan12}]
\label{thm:gen_hash}
	For every integer $n$ and $t$ such that $t \leq 2^n$ there is a family of $k$-wise independent functions $\mathcal{H} = \{h \colon \{0,1\}^n \to \{0,1\}^n\}$ such that choosing a random function from $\mathcal{H}$ takes $nt$ random bits and evaluating a function from $\mathcal{H}$ takes time $\poly(n,t)$.
\end{theorem}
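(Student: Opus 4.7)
The plan is to reproduce the standard polynomial-based construction over $\mathbb{F}_{2^n}$. First I would identify $\{0,1\}^n$ with the finite field $\mathbb{F}_{2^n}$ via an explicit degree-$n$ irreducible polynomial over $\mathbb{F}_2$; such a polynomial can be produced deterministically in $\poly(n)$ time by known algorithms, so I would just cite this and move on. Then I would define
\[
  \mathcal{H} = \bigl\{ h_{a_0,\ldots,a_{t-1}}(x) = a_0 + a_1 x + \cdots + a_{t-1} x^{t-1} \;:\; a_0,\ldots,a_{t-1} \in \mathbb{F}_{2^n} \bigr\},
\]
so that a uniformly random member is specified by $t$ independent uniform elements of $\mathbb{F}_{2^n}$, which costs exactly $nt$ random bits.

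For $t$-wise independence, I would fix any distinct points $x_1,\ldots,x_t \in \mathbb{F}_{2^n}$ and consider the evaluation map $(a_0,\ldots,a_{t-1}) \mapsto (h(x_1),\ldots,h(x_t))$. This is a linear map whose matrix is the $t \times t$ Vandermonde matrix in $x_1,\ldots,x_t$; since the $x_i$ are distinct, its determinant $\prod_{i<j}(x_j - x_i)$ is nonzero in $\mathbb{F}_{2^n}$, so the map is a bijection from $\mathbb{F}_{2^n}^t$ onto itself. Consequently, a uniform coefficient vector pushes forward to a uniform distribution on $\mathbb{F}_{2^n}^t$, which is exactly the statement that $h(x_1),\ldots,h(x_t)$ are mutually independent and uniform on $\{0,1\}^n$. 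For the time bound, Horner's rule evaluates $h(x)$ using $t-1$ multiplications and additions in $\mathbb{F}_{2^n}$, each of which takes $\poly(n)$ time, for a total of $\poly(n,t)$.

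There is no real obstacle here; everything reduces to invertibility of the Vandermonde matrix over a finite field. The only step that requires caution is the explicit construction of $\mathbb{F}_{2^n}$, which I would handle by citing a deterministic $\poly(n)$-time construction of an irreducible polynomial of degree $n$ over $\mathbb{F}_2$ rather than reproving it.
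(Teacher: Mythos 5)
Your proof is correct and is the standard polynomial-evaluation construction over $\mathbb{F}_{2^n}$ — exactly the argument given in the cited source (Vadhan's survey); the paper itself states this theorem as a citation without reproving it. The only cosmetic point is that the statement has a typo ($k$-wise should read $t$-wise), which you correctly interpret, and the hypothesis $t \le 2^n$ is what guarantees that $t$ distinct evaluation points exist so the Vandermonde argument applies.
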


\begin{theorem}[\cite{BellareR94}] \label{thm:hash-functions}
	Let $X$ be the average of $N$ $t$-wise independent random variables $X_1, \ldots, X_N \in [0,1]$ for even $t$. Then for any $\eps>0$ we have
	$$
	\Pr\left[\left|X - \E[X]\right| \geq \eps \right] \leq 1.1 \left( \frac{t}{N\eps^2}  \right)^{t/2}.
	$$
\end{theorem}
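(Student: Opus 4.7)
The plan is to use the moment method. Writing $Y_i = X_i - \E X_i$ and $S = \sum_{i=1}^N Y_i$, I would first note that the $Y_i$ remain $t$-wise independent, have mean $0$, satisfy $|Y_i| \leq 1$, and $\E Y_i^2 \leq 1/4$ (since $X_i \in [0,1]$). Because $t$ is even, Markov's inequality applied to $(X - \E X)^t = (S/N)^t$ gives
$$\Pr\left[\left|X - \E X\right| \geq \eps\right] \leq \frac{\E[S^t]}{(N\eps)^t},$$
so the whole task collapses to an upper bound on $\E[S^t]$.

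To bound $\E[S^t]$, I would expand
$$\E[S^t] = \sum_{(i_1, \ldots, i_t) \in [N]^t} \E[Y_{i_1} \cdots Y_{i_t}]$$
and group tuples by the set partition of $\{1, \ldots, t\}$ induced by equalities among the indices. By $t$-wise independence, each surviving expectation factors across blocks; any block of size $1$ contributes a factor $\E Y_i = 0$ and kills its term. So only partitions into blocks of size $\geq 2$ contribute, and for such a partition with $m$ blocks of sizes $s_1, \ldots, s_m$ the total contribution is at most $N^m \cdot \prod_j \E[|Y_i|^{s_j}] \leq (N/4)^m$, because $|Y_i| \leq 1$ forces $|Y_i|^{s_j} \leq Y_i^2$ whenever $s_j \geq 2$.

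The dominant contribution comes from the perfect matchings ($m = t/2$): there are $(t-1)!! \leq t^{t/2}$ of them, giving at most $(Nt/4)^{t/2}$ in total. Partitions with fewer blocks are smaller by a factor of roughly $(N/4)^{m - t/2}$ relative to this main term; a short combinatorial estimate (conveniently phrased via the exponential generating function $\exp(e^z - 1 - z)$ of set partitions with all blocks of size $\geq 2$) bounds their aggregate contribution by at most $0.1$ times the perfect-matching contribution. Hence $\E[S^t] \leq 1.1\,(Nt/4)^{t/2}$, and dividing by $(N\eps)^t$ yields the claimed tail bound (in fact with constant $1.1/2^{t/2}$ in place of $1.1$).

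I expect the main obstacle to be not the overall method, which is the textbook moment-method template, but the combinatorial bookkeeping needed to obtain the sharp constant $1.1$ rather than, say, some $O(1)$ factor like $e$ or $2$. Extracting $1.1$ requires careful counting of partitions into blocks of size $\geq 2$ and careful use of $\E Y_i^2 \leq 1/4$; this is finicky but routine, and the rest of the argument carries no additional subtlety.
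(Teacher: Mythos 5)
The paper cites this theorem from Bellare and Rompel~\cite{BellareR94} without reproving it, so there is no in-paper proof to compare against; the $t$th-moment method you propose is indeed the route taken in that reference, and your setup (Markov on $S^t$, expansion over index tuples, reduction to partitions with all blocks of size $\geq 2$ via $t$-wise independence and centering, and the bound $|\E[Y_i^s]|\le \E[Y_i^2]\le 1/4$ for $s\ge 2$) is correct.

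The gap is in the claim that non-matching partitions contribute at most $0.1$ times the perfect-matching contribution, which you use to conclude $\E[S^t]\le 1.1\,(Nt/4)^{t/2}$. This is false in general. Already the partitions with $m=t/2-1$ blocks (pairs plus one block of size $4$, or pairs plus two blocks of size $3$) contribute a fraction $\tfrac{t(t-2)}{6N}+\tfrac{t(t-2)(t-4)}{18N}=\Theta(t^3/N)$ of the matching contribution $(t-1)!!\,(N/4)^{t/2}$, which exceeds $0.1$ whenever $N\lesssim t^3$; and the theorem is non-trivial there (e.g.\ $N=\Theta(t)$, $\eps$ a small constant gives $t/(N\eps^2)<1$). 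Consequently the sharpened constant you announce (which, incidentally, would be $1.1/4^{t/2}=1.1/2^{t}$, not $1.1/2^{t/2}$) does not hold. The theorem is still true because after Markov you only need $\E[S^t]\le 1.1\,(Nt)^{t/2}$ — you have a factor $4^{t/2}$ of slack you did not spend. One can reduce to $N>t$ (for $\eps\le 1$, if $N\eps^2\le t$ the bound exceeds $1$; and $\eps>1$ is vacuous since $|X-\E X|\le 1$), bound the number of partitions of $[t]$ into $m$ blocks of size $\ge 2$ by $\binom{t}{2m}(2m-1)!!\,m^{t-2m}$, and obtain
\[
\E[S^t]\ \le\ \sum_{m=1}^{t/2}\Bigl(\tfrac t2\Bigr)^t\Bigl(\tfrac{eN}{2m}\Bigr)^m\ \le\ \tfrac t2\Bigl(\tfrac{eNt}{4}\Bigr)^{t/2}
=\tfrac t2\,(e/4)^{t/2}\,(Nt)^{t/2}\le 1.1\,(Nt)^{t/2},
\]
where the inner sum is dominated by its last term because $m\mapsto(eN/(2m))^m$ increases for $m<N/2$ and $t/2<N/2$. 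So the method is right, but the ``short combinatorial estimate'' you gesture at is the genuinely delicate part, your $0.1$ claim is incorrect, and the promised $2^{-t/2}$ improvement does not survive.
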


\section{Sub-log-squared Circuit for Majority}
\label{sec:maj-circuit}

In this section, we provide a proof of Theorem~\ref{thm:main1}.

Our goal is to compute $\MAJ_n$ by an explicit circuit of polynomial size and $o(\log^2 n)$ depth. We assume for convenience that $n$ is odd (for even $n$ we can consider a circuit for $n+1$ and substitute one variable by a constant). We start with some inferior circuit and perform several operations that allow us to gradually improve the parameters. However, on our way, we need to consider randomized circuits as well, and apart from size and depth, we will also be interested in the number of random bits and the error probability. More specifically, a circuit is an $(s,d,r,\err)$-circuit for majority if its size is at most $2^{s}$, depth is at most $d$, we can construct a circuit using at most $r$ random bits and the error probability on each input is at most $2^{-\err}$. Here all parameters are functions in the number of inputs $n$ (we write $\err = \infty$ when the circuit is correct with probability 1). All circuits we are going to consider are effectively constructible: there is an algorithm that given the values of random bits constructs a circuit in polynomial time in the size of the circuit.

Given a circuit with some parameters, we will use two operations to obtain new circuits. We are introducing these operations in the next two lemmas. Their effect on the circuit is summarized in the table below. 

\begin{center}
	\renewcommand{\arraystretch}{1.8}
	\begin{tabular}{|c|c|c|}\hline
		Initial circuit & Brute-force derandomization & Downward self-reduction\\ \hline
		$s(n)$ & $O(s(n) + r(n))$ &  $O(\log n) + s(2k)$\\ \hline
		$d(n)$ & $d(n) + O(r(n))$ &  $O\left(\left( \frac{\log_2 n}{\log_2 k} \right)^2 d(2k)\right)$\\ \hline
		$r(n)$ & $0$ &  $r(2k)$\\ \hline
		$\err(n)$ & $\infty$ &  $\err(2k) - O(\log n)$\\ \hline
	\end{tabular}
	\renewcommand{\arraystretch}{1}
\end{center}

\begin{lemma}[Brute-force derandomization] \label{lem:naive_derand}
	If there is an $(s,d,r,2)$-circuit $C$, then there is an $(O(s+r),d+O(r),0,\infty)$-circuit.
\end{lemma}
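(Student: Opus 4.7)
The plan is the textbook brute-force derandomization, but implemented via \emph{approximate} majority rather than exact majority, which is essential for keeping the depth additive in $r$.

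First I would enumerate all $2^r$ possible settings of the $r$ random bits of $C$. For each fixed setting $\rho \in \{0,1\}^r$, we get a deterministic circuit $C_\rho$ of size at most $2^s$ and depth at most $d$, and by assumption $\Pr_\rho[C_\rho(x) \neq \MAJ_n(x)] \leq 2^{-2} = 1/4$ for every input $x$. Hence for every fixed $x$, among the $2^r$ outputs $\{C_\rho(x)\}_\rho$, at least $3/4$ of them equal $\MAJ_n(x)$. In other words, the fraction of ones in the vector of outputs is either at least $3/4$ (if $\MAJ_n(x) = 1$) or at most $1/4$ (if $\MAJ_n(x) = 0$), which is bounded away from $1/2$ by the constant $\eps = 1/8$.

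Next I would combine the $2^r$ circuits $C_\rho$ in parallel and feed their outputs into an $\eps$-approximate majority circuit on $N = 2^r$ inputs, supplied by Theorem~\ref{thm:approx_maj}. By the previous paragraph, the input to this approximate majority always lies in the promise region, so its output equals $\MAJ_n(x)$. The approximate majority circuit has size $\poly(N) = 2^{O(r)}$ and depth $O(\log N) = O(r)$.

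Finally I would verify the parameters. The total size is $2^r \cdot 2^s + 2^{O(r)} = 2^{O(s+r)}$, which gives log-size $O(s+r)$ as required. The depth is $d + O(r)$: one block of depth $d$ to evaluate the $C_\rho$'s in parallel, then $O(r)$ depth for the approximate majority. No random bits remain, and the circuit is exactly correct, so the error parameter is $\infty$. Effective constructibility is inherited from the effective constructibility of the $C_\rho$'s (enumerate all $\rho$) and of the explicit approximate majority circuit. There is no real obstacle here; the only conceptual point is that one must use approximate majority rather than exact majority, since exact majority on $2^r$ bits would itself be a circuit whose depth we do not yet know how to bound well, whereas approximate majority suffices thanks to the $2^{-2}$ error margin and is given to us by Theorem~\ref{thm:approx_maj}.
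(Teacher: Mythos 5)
Your proof is correct and matches the paper's argument essentially step for step: enumerate all $2^r$ seeds, run the resulting deterministic circuits in parallel, observe that a $3/4$ fraction agree with $\MAJ_n(x)$ on every fixed $x$, and feed the outputs into the approximate-majority circuit of Theorem~\ref{thm:approx_maj} to get size $2^{O(s+r)}$, depth $d+O(r)$, no randomness, and zero error. The remark you add explaining \emph{why} approximate (rather than exact) majority is the right tool is a sound and useful clarification, though the paper leaves it implicit.
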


This lemma allows us to get rid of randomness but increases the depth and the size of the circuit if $r$ is large.

\begin{proof}
	Consider a randomized circuit $C_y(x)$, where $x\in \{0,1\}^n$ is an input and $y \in \{0,1\}^r$ is the sequence of random bits. Assume $C_y(x)$ has the parameters, as in the statement of the lemma.
	Consider circuits $C_y(x)$ for all possible values of $y$ and observe that for any $x$ the fraction of circuits that output $\MAJ_n(x)$ is at least $1 - 1/4 = 3/4$. Thus, if we feed $C_y(x)$ for all $y$ into a circuit from Theorem~\ref{thm:approx_maj} computing $\MAJ_{2^r}^\eps$, the output is exactly $\MAJ_n(x)$. 
	
	The size of the resulting circuit is at most $2^r \cdot 2^s + \poly(2^r)$, where the first term corresponds to computing $C_y(x)$ for all $y$ and the second term corresponds to computing $\MAJ_{2^r}^{\eps}$. Thus, the size is $2^{O(s+r)}$. Since all $C_y(x)$ can be computed in parallel, the depth of the circuit is at most $d + O(r)$.  The resulting circuit does not use random bits and is always correct.
\end{proof}

\begin{lemma}[Downward self-reduction]
	If there is   an $(s(n),d(n),r(n),\err(n))$-circuit $C$, then for any $k < n$ there is an $(O(\log n) + s(2k), O(\log_k^2 n \cdot d(2k)), r(2k), \err(2k) - O(\log n))$-circuit.
\end{lemma}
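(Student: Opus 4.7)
The plan is to combine an efficient $k$-sorting network with the given randomized circuit $C$ for $\MAJ_{2k}$. I would take a $k$-sorting network on $n$ inputs of depth $O(\log_k^2 n)$ with $\poly(n)$ comparators, for example the Parker--Parbery construction~\cite{ParkerP89} or the one from Theorem~\ref{thm:sorting_upper_bound}. By Lemma~\ref{lem:sorting-to-maj}, this converts into a deterministic circuit of the same depth built from $\MAJ_{2k}$ gates, whose median output computes $\MAJ_n$. Then I would replace each $\MAJ_{2k}$ gate by a copy of $C$ on its $2k$ input wires, and let all copies share a single block of $r(2k)$ random bits used in the construction of $C$.

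The parameter bookkeeping is straightforward: the total size is at most $\poly(n) \cdot 2^{s(2k)} = 2^{O(\log n) + s(2k)}$; the depth is at most $O(\log_k^2 n) \cdot d(2k)$; and the randomness is $r(2k)$, since the same bits are reused in every copy.

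The main subtlety is the error analysis, because the input fed to a copy of $C$ on a higher layer depends on the outputs of the copies below, so the per-input error guarantee for $C$ cannot be applied directly. To handle this, I would fix an input $x$ to the full circuit and, for each $\MAJ_{2k}$ gate $g$, let $a_g(x)$ denote the string that $g$ would receive in the idealized, error-free execution of the sorting-network circuit on $x$. Call $g$ \emph{honest} (with respect to the drawn random string) if the copy of $C$ replacing $g$ computes $\MAJ_{2k}(a_g(x))$ correctly on input $a_g(x)$. An easy induction on the layer number shows that if every gate on lower layers is honest then $g$ actually receives $a_g(x)$; hence if all gates are honest, the final output equals $\MAJ_n(x)$. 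Since $a_g(x)$ is a deterministic function of $x$, the probability that a fixed gate is dishonest is at most $2^{-\err(2k)}$, and a union bound over the $\poly(n)$ gates yields total failure probability at most $\poly(n) \cdot 2^{-\err(2k)} = 2^{-(\err(2k) - O(\log n))}$, as required.

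The hard part is really just recognising that the right event to union-bound is ``gate $g$ fails on its \emph{intended} input $a_g(x)$'' rather than ``gate $g$ fails on the input it actually sees'': the former decouples across gates and does not require independence between the copies of $C$, so reusing the same random bits across the $\poly(n)$ copies is essentially free and keeps the randomness cost at $r(2k)$.
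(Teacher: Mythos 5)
Your proposal is correct and follows essentially the same approach as the paper: convert a $k$-sorting network of depth $O(\log_k^2 n)$ into a $\MAJ_{2k}$-gate circuit via Lemma~\ref{lem:sorting-to-maj}, substitute each gate by a copy of $C$ sharing one block of random bits, and union-bound the event that some gate errs on the input it would receive in the error-free execution. Your ``honest gate'' formulation just makes explicit the induction that the paper leaves implicit when it passes from ``$C_y(z^i)=\MAJ_{2k}(z^i)$ for all $i$'' to ``$D_y(x)$ is correct.''
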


This operation increases the depth (if $d(n)$ is sub-log-squared), but allows to reduce other parameters.

\begin{proof}
	Consider a $k$-sorting network of depth $O(\log_k^2 n)$, given by~\cite{ParkerP89} or by our Theorem~\ref{thm:sorting_upper_bound} (the latter allows only for limited values of $k$, but the values we will actually use in the construction below are within the limits). 
	By Lemma~\ref{lem:sorting-to-maj} this network gives us a monotone circuit with the same parameters consisting of $\MAJ_{2k}$ gates computing $\MAJ_n$, denote this circuit by $C(x)$, where $x \in \{0,1\}^n$.

	Consider a $(s(2k), d(2k), r(2k), \err(2k))$-circuit $C_y$ on $k$ inputs, where $y \in \{0,1\}^{r(2k)}$. Fix $y$ and substitute each $\MAJ_{2k}$ gate in $C$ by $C_y$. Denote the resulting circuit by $D_y(x)$. This is a standard monotone Boolean circuit, its size is  $\mathrm{poly}(n) \cdot 2^{s(2k)}$, its depth is $O(\log^2_k n \cdot d(2k))$ and the number of random bits is $r(2k)$.
	
	It remains to show that the error probability is not too large. For this fix some input $x \in \{0,1\}^n$. Consider all $\MAJ_{2k}$ gates in $C(x)$ and denote their inputs when $x$ is fed to $C$ by $z^1, z^2, \ldots, z^t$. Here $t$ is the size of $C$ and is polynomial in $n$.
	
	For each $z^i$ the probability over random $y$ that $C_y(z^i)$ computes $\MAJ_{2k}(z^i)$ incorrectly is at most $2^{-\err(2k)}$. By union bound, with probability at least $1 - t 2^{-\err(2k)}$ we have $C_y(z^i) = \MAJ_k(z^i)$ for all $i$ and thus $D_y(x)$ computes $\MAJ_n(x)$ correctly. Thus, the probability of error of the resulting circuit is at most
	$$
	t \cdot 2^{-\err(2k)} = 2^{-\err(2k) + O(\log n)} 
	$$
\end{proof}

Now we describe our starting circuit. Interestingly, it is constructed as a partial derandomization of Valiant's construction.

\begin{lemma} \label{lem:start}
	There is an explicit circuit for majority with parameters $(O(\log n), O(\log n), O(\log^3 n), \Omega(\log^2 n))$.
\end{lemma}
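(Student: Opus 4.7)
The plan is to partially derandomize Valiant's $O(\log n)$-depth randomized monotone formula for $\MAJ_n$~\cite{Valiant84}, replacing its fully independent random leaf-samples with fresh $t$-wise independent hash functions at each level, as suggested by Cohen et al.~\cite{CohenDIKMRR13}.

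First I would reformulate Valiant's construction as a layered circuit on $M = n^c$ wires per layer (for a sufficiently large constant $c$). Layer $0$ consists of the bits $x_{h_0(i)}$ for $i \in [M]$, and for each $\ell = 1, \ldots, d$ with $d = O(\log n)$, bit $i$ of layer $\ell$ is the output of a constant-depth monotone self-dual amplification subcircuit (for concreteness, $\MAJ_3$, whose amplification function $f(p) = 3p^2 - 2p^3$ has derivative $3/2 > 1$ at its fixed point $p = 1/2$) applied to three bits of layer $\ell - 1$ at positions $h_\ell(i,1)$, $h_\ell(i,2)$, $h_\ell(i,3)$. The final output is obtained by feeding layer $d$ into the approximate majority circuit of Theorem~\ref{thm:approx_maj}.

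Next I would draw each $h_\ell$ independently from a $(3t)$-wise independent family supplied by Theorem~\ref{thm:gen_hash}, with $t = \Theta(\log n)$; this costs $O(t \log M) = O(\log^2 n)$ random bits per level and $O(\log^3 n)$ bits in total. The resulting circuit has size $\poly(n)$ and depth $O(\log n)$, since the hashes are fixed when instantiating the circuit and only the wire indices (not the circuit depth) depend on them. For the analysis, fix an input $x$ with majority bias $|p - 1/2| \geq 1/n$ where $p = |x|/n$, condition on $h_0, \ldots, h_{\ell - 1}$, and write $p_{\ell - 1}$ for the resulting empirical fraction of $1$s at layer $\ell - 1$. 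Any $t$ gate outputs at layer $\ell$ depend on $3t$ values of $h_\ell$, which by $(3t)$-wise independence are fully independent, so those $t$ outputs are $t$-wise independent Bernoullis with parameter $f(p_{\ell - 1})$. Theorem~\ref{thm:hash-functions} applied with $\eps = 1/n$ then gives $|p_\ell - f(p_{\ell - 1})| \leq \eps$ except with probability $\bigl(t/(M\eps^2)\bigr)^{t/2} = 2^{-\Omega(\log^2 n)}$, provided $c$ is chosen large enough. A union bound over the $O(\log n)$ levels yields total failure probability $2^{-\Omega(\log^2 n)}$; when no level fails, a standard Valiant-style induction shows that $p_d$ lies within $o(1)$ of $\MAJ_n(x)$, and the approximate majority circuit then outputs the correct bit.

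The main obstacle is controlling the accumulation of per-level concentration errors during the amplification phase: near the fixed point $1/2$, errors are multiplied by $f'(1/2) = 3/2$ at each step, so over $O(\log n)$ levels we must ensure per-level deviation as small as $1/n$ rather than a constant. This forces $M = n^c$ for a sufficiently large $c$ and $t = \Theta(\log n)$; for these parameters Theorem~\ref{thm:hash-functions} still delivers the desired $2^{-\Omega(\log^2 n)}$ tail while keeping the total number of random bits within the $O(\log^3 n)$ budget promised by Theorem~\ref{thm:gen_hash}.
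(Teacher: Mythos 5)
Your proposal is correct and follows essentially the same approach as the paper: a layered circuit of $\MAJ_3$ gates with $\poly(n)$ wires per layer, fresh $\Theta(\log n)$-wise independent hash functions at each of the $O(\log n)$ levels, the Bellare--Rompel tail bound to get per-level deviation $O(1/n)$ with failure probability $2^{-\Theta(\log^2 n)}$, and a final application of the approximate-majority circuit. The only cosmetic difference is that you use a single $(3t)$-wise independent function per level to select all three inputs of each gate, whereas the paper draws three independent $t$-wise independent functions; both yield the needed $t$-wise independence of the gate outputs and the same $O(\log^3 n)$ random-bit count.
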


We provide the proof of Lemma~\ref{lem:start} in Section~\ref{sec:proof-of-start} below, but before that, we explain how to finish the construction of the desired circuit for $\MAJ_n$.

Starting with the circuit provided by Lemma~\ref{lem:start}, we first apply downward self-reduction with the parameter $k$ satisfying $\log k = C \sqrt{\log n}$ for some big enough constant $C > 0$, then we apply brute-force derandomization, and then we apply downward self-reduction again with $k$ satisfying $\log k = \log^{2/3} n$.
We summarize the changes in the parameters after each step in the table below.
\begin{center}
	\renewcommand{\arraystretch}{1.8}
	\begin{tabular}{|p{0.83cm}||p{2.0cm}|p{2.4cm}|p{2.6cm}|p{2.5cm}|} \hline
		& Initial \mbox{circuit} & Step 1 & Step 2 & Step 3\\ \hline
		& & Self-reduction with \mbox{$\log k = \sqrt{\log n}$} & Brute-force \mbox{derandomization} & Self-reduction with \mbox{$\log k = \log^{2/3} n$}\\ \hline
		$s(n)$& $O(\log n)$ & $O(\log n)$ & $O(\log^{3/2} n)$ & $O(\log n)$ \\ \hline
		$d(n)$ & $O(\log n)$ & $O(\log^{3/2} n)$ & $O(\log^{3/2} n)$ & $O(\log^{5/3} n)$  \\ \hline
		$r(n)$& $O(\log^3 n)$  & $O(\log^{3/2} n)$ & $0$ & $0$ \\ \hline
		$\err(n)$& $\Omega(\log^2 n)$ & $\Omega(\log n)$ & $\infty$ & $\infty$ \\ \hline
	\end{tabular}
	\renewcommand{\arraystretch}{1}
\end{center}

\begin{remark}
	Note that with the two operations in hand, there are not that many options to apply them to a given initial construction. It is not hard to check that applying downward self-reduction two times in a row is not better than applying it once with the appropriate value of $k$. Clearly, there is no need to apply the derandomization step twice. From this, it is not hard to see that our sequence of operations is actually optimal. Once the optimal sequence of operations is established, it is not hard to check that our choice of parameters in downward self-reductions is optimal as well.
	
\end{remark}

\subsection{Proof of Lemma~\ref{lem:start}}
\label{sec:proof-of-start}

In this subsection, we are going to prove Lemma~\ref{lem:start}. The high-level idea is to partially derandomize Valiant's construction. To make the presentation self-contained we first recall the idea behind this construction.

Suppose we have independent random bits $x,y,z$ that are equal to 1 with probability $p$ and consider $\MAJ_3(x,y,z)$. It is not hard to see that it outputs 1 with probability $f(p) = p^3 + 3p^2(1-p)$. Consider $p = \frac 12 + \eps$ for some $\eps>0$ and denote $\eps' = f(p) - \frac 12$. Then
$$
\eps' = f(p) - \frac 12 = f(p) - f(\frac 12) = f'(\alpha)(p - \frac 12) = f'(\alpha) 
\eps
$$
for some $\alpha \in [\frac 12, p]$.
Note that $f'(p) = 6p - 6p^2 = 6p(1-p)$. It is easy to see that for $\alpha \in [\frac 12, \frac 23]$ we have $f'(\alpha) \geq \frac 43$.
Thus, for $\eps \in [\frac 12, \frac 23]$ we have $\eps' \geq \frac 43 \eps$.

Now, we can use this in the following way. Consider 
$\MAJ_n$ for odd $n$ and consider its arbitrary input $x$. Without loss of generality, assume that $\MAJ_n(x)=1$. If we draw one variable from $x$ uniformly at random, it is equal to 1 with probability at least $\frac 12 + \frac 1n$. Consider a $\MAJ_3$ gate and feed to it three independently and uniformly drawn input variables. By the analysis above the output of such a $\MAJ_3$ gate is equal to 1 with probability at least $\frac 12 + \frac 43 \cdot \frac 1n$. Now we can repeat this: consider three such $\MAJ_3$ gates and feed their outputs to another $\MAJ_3$ gates. The result is equal to 1 with probability $\frac 12 + \left( \frac 43 \right)^2 \frac 1n$. After $O(\log n)$ many iterations, we get a $O(\log n)$-depth randomized circuit consisting of $\MAJ_3$ gates that output the correct value with probability at least $\frac 23$. Valiant's argument further improves this probability, but we will not need this part of the argument.

The randomized circuit above uses too many random bits. Now we are going to modify the construction in a way, that uses randomness more efficiently. We will use some ideas from~\cite{CohenDIKMRR13}. 

Construct the following circuit consisting of $\MAJ_3$ gates. The circuit contains $\Theta(\log n)$ layers, each containing $N = n^3$ gates. The bottom layer consists of input variables,  each  repeated $\frac Nn = n^2$ times (it is redundant to copy variables several times, we do this exclusively for the sake of uniformity of the construction). In other layers, each gate computes the $\MAJ_3$ function of some gates from the previous layer. To assign the inputs to each gate, for each layer $j$ we draw three fresh (and independent of each other) $t$-wise independent hash functions $f_j,g_j,h_j \colon [N] \to [N]$ for $t = \Theta(\log n)$.
For a gate with number $i$ in layer $j$ we set its inputs to be gates with numbers $f_j(i)$, $g_{j}(i)$ and $h_{j}(i)$ in layer $(j-1)$.

Before we finish the construction of the circuit, let us analyze the current part.
Consider some input $x \in \{0,1\}^n$, assume without loss of generality that $\MAJ_n(x)=1$.
Denote by $\frac 12 + \eps_i$ the fraction of gates on level $i$ that output 1. For $i=1$ we have $\eps_1 \geq \frac 1n$.

Each gate on level $i$ receives three independent inputs from the previous level. Thus, the probability that it outputs 1 is at least $\frac 12 + \frac 43 \eps_{i-1}$ (we have shown this above only for $\eps_{i-1} \leq \frac 23$, but these values of $\eps_{i-1}$ are enough for our construction as well). Thus, the expected fraction of ones in level $i$ is also at least $\frac 12 + \frac 43 \eps_{i-1}$.

Now we would like to use concentration inequality to show that with high probability the fraction of correct values is not much smaller than its expectation. Note that the outputs of the gates on level $i$ are t-wise independent.

Let $\eps = \frac{1}{6n}$ and denote by $X_i$ the output of $i$-th gate. Then by Theorem~\ref{thm:hash-functions} we have
$$
\Pr\left[\left|\sum_i X_i/N - (\frac 12 + (4/3)\eps_{j-1})\right| \geq \eps \right] \leq 1.1\left( \frac{t}{N\eps^2}  \right)^{t/2} = 2^{- \Theta(\log^2 n)}.
$$

By union bound the probability that on each level $\eps_j \geq \frac 43 \eps_{j-1} - \eps$ is at least
$$
1- O(\log n) \cdot 2^{-\Theta(\log^2 n)} = 1 - 2^{-\Theta(\log^2 n)}.
$$
Thus, we can show by induction on $j$ that with probability at least $1 - 2^{\Theta(\log^2 n)}$ we have
$$
\eps_j \geq \frac 43 \eps_{j-1} - \eps \geq \frac 76 \eps_{j-1} + \frac 16 \eps_{j-1} - \frac{1}{6n} \geq \frac 76 \eps_{j-1},
$$
where in the last inequality we use that by induction hypothesis we have $\eps_{j-1} \geq \left(\frac 76\right)^{j-1}\cdot \eps_1 \geq \frac 1n$.

Thus, just like in Valiant's argument,  after $O(\log n)$ iterations, with porbability $1 - 2^{-\Theta(\log^2 n)}$, we have $\eps_j \geq \frac 23$. At this point, it remains to apply to the last layer a circuit from Theorem~\ref{thm:approx_maj}. 

It is easy to see that the size  of the resulting circuit is $\poly(n)$, the depth is $O(\log n)$, error probability is $2^{- \Theta(\log^2 n)}$. 
As for the random bits, note that in the construction we need $O(\log n)$ $t$-wise independent hash functions from $[N]$ to $[N]$. By Theorem~\ref{thm:gen_hash} there are families of such functions defined using $O(t \log N)$ random bits. In total we need
$$
O(\log n) \cdot O(t \log N) = O(\log^3 n)
$$
random bits.

This finishes the proof of Lemma~\ref{lem:start}.

\begin{remark}
    Instead of applying a circuit for Approximate Majority to the last layer, we could do the following: sample $m = O(\log^2 n)$ gates from the last layer uniformly at random and then compute the majority on these $m$ gates using some simple circuit of depth $O(\log^2 m)$. By Chernoff's inequality, this adds at most $2^{-\Omega(m)} = 2^{- \Theta(\log^2 n)}$ to the error probability, and we need $O(\log^3 n)$ random bits. In turn, the increase in depth and size is negligible.
\end{remark}

\section{$k$-Sorting Network Construction}
\label{sec:k-sorting-network}

\subsection{Proof Strategy}

Before we proceed to the proof we would like to illustrate the idea considering some specific value of $k$. For convenience, we assume that $n$ is a cube of a natural number.

\begin{lemma}
	Assume that $n=t^3$ for natural $t$. Then there is a depth-4 $k$-sorting network with $k = 2t^2 = 2 n^{2/3}$.
\end{lemma}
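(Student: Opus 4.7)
The plan is to view the $n=t^3$ input positions as a $3$-dimensional cube indexed by triples $(a,b,c)\in[t]^3$ and to use comparators that sort $2$-dimensional sections of this cube. By the zero-one principle, it suffices to verify correctness on inputs in $\{0,1\}^n$, so I would fix such an input with $w$ ones throughout.

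The first three layers would sort the $t^2$-sized faces of the cube, one orientation per layer: layer~1 sorts each of the $t$ faces $\{(a,b,c):a,b\in[t]\}$ for fixed $c$; layer~2 sorts each of the $t$ faces with fixed $b$; layer~3 sorts each of the $t$ faces with fixed $a$. A single comparator of size $t^2\le 2t^2=k$ handles each face, so each layer uses $t$ comparators. The in-face linear orders would be chosen compatibly with one global linear order on $[t]^3$, for instance the lexicographic order on the triples, so that the three face sorts reinforce one another rather than fighting.

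The central analytic claim, and the main obstacle of the proof, is to show that for a $0/1$ input the array becomes $s$-sorted with $s\le 2t^2$ after these three layers, i.e.\ it agrees with the fully sorted output outside a single contiguous interval of length at most $2t^2$ in the global order. The geometric fact powering this is the non-trivial intersection of $2$D faces of different orientations: each pair of orthogonal faces shares a line of $t$ cells. Sorting three orientations in sequence pins the region of $1$'s against these intersection lines simultaneously in all three axis directions, and a careful combinatorial argument, using Lemma~\ref{lem:sorting-monotonicity} to guarantee that correctly placed extreme values stay put, should force this region to lie in a short interval. This is exactly the ``non-trivial intersection of sections'' that the authors single out as the reason for going beyond two dimensions.

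The fourth layer then partitions $[n]$ into $\lceil t/2\rceil$ consecutive blocks of length $2t^2$ in the global order and sorts each with a single comparator of size $2t^2$. If the residual unsorted interval of length $\le 2t^2$ straddles a boundary between two blocks, I would instead use a partition shifted by $t^2$ so that the whole interval lies inside one block; either way, one pass of comparators of size $2t^2$ finishes the sort. A final appeal to the zero-one principle extends the conclusion from $\{0,1\}^n$ to arbitrary integer inputs and yields the lemma.
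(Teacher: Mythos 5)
Your decomposition into a $[t]^3$ cube and the first two face-sort layers (fix $c$, then fix $b$) match the paper, which also shows that after these two layers all vertical columns have weights within $2$ of each other, so the still-unsorted entries sit inside two consecutive horizontal faces, an interval of length $2t^2$ whose endpoints are multiples of $t^2$. Where you diverge is in layers 3--4, and that is where there is a genuine gap. Your fourth layer, as written, is \emph{adaptive}: you say ``if the residual unsorted interval straddles a boundary between two blocks, I would instead use a partition shifted by $t^2$.'' A sorting network is oblivious --- the partition of positions into comparators at each layer is fixed once and for all and cannot be chosen after looking at the input --- so this fallback move is not available. The paper's solution to exactly this problem is to spend \emph{two} layers on block sorts with horizontal slices of width $2$, one starting at face $1$ and the other shifted by one face; since the unsorted pair of faces has unknown but aligned position, one of the two shifted partitions is guaranteed to contain it in a single comparator.

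Your third face-sort layer (fixing $a$, i.e.\ sorting each horizontal face) does not rescue the single-layer finish. After layers 1--2 the unsorted region occupies faces $t-m-1$ and $t-m$ for some input-dependent height $m$; sorting each of these two faces internally shrinks the region that disagrees with the fully sorted array to length at most $t^2$, but its left endpoint moves to an input-dependent offset inside that pair of faces, so it is \emph{less} aligned with any fixed block boundary than before, not more. Thus there are inputs (roughly, those for which $t-m-1$ is even) where the residual interval straddles whichever fixed $2t^2$-block partition you commit to, and the network fails to sort them. You also leave the ``central analytic claim'' (that three face sorts leave an $s$-sorted array with $s\le 2t^2$) unproved; the paper proves the corresponding claim carefully after only two face sorts. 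To repair your argument, replace your third and fourth layers by the paper's two shifted block-sort layers, or supply a third layer that provably aligns the unsorted interval to a fixed partition.
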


We present the proof using a geometric interpretation of an input array as a three-dimensional cube. However, note that a similar result is implicit in~\cite{Leighton85} and it is essentially the same construction, just in different terms. We also note that it is known that this is the optimal (up to a constant factor) value of $k$ for depth-4 sorting networks~\cite{Dobrokhotova-MaikovaKP23}.

\begin{figure}
	\centering
	\begin{subfigure}[t]{.3\linewidth}
		\centering
		\begin{tikzpicture}[scale=0.5]
			\foreach \x in {0,...,5}
			{   \draw (0,\x ,5) -- (5,\x ,5);
				\draw (\x ,0,5) -- (\x ,5,5);
				\draw (5,\x ,5) -- (5,\x ,0);
				\draw (\x ,5,5) -- (\x ,5,0);
				\draw (5,0,\x ) -- (5,5,\x );
				\draw (0,5,\x ) -- (5,5,\x );
			}
        \draw[->] (0,0,5) -- (0,0,7) node[above] {$y$};
        \draw[->] (0,5,0) -- (0,6,0) node[right] {$x$};
        \draw[->] (5,0,0) -- (6,0,0) node[above] {$z$};
		\end{tikzpicture}
		\caption{Input array}
		\label{fig:input}
	\end{subfigure}
	\hfill
	\begin{subfigure}[t]{.3\linewidth}
		\centering
		\begin{tikzpicture}[scale=0.5]
			\foreach \x in {0,...,5}
			{   \draw[color=black!20] (0,\x ,5) -- (5,\x ,5);
				\draw[color=black!20] (5,\x ,5) -- (5,\x ,0);
				\draw[color=black!20] (5,0,\x ) -- (5,5,\x );
				\draw[color=black!20] (0,5,\x ) -- (5,5,\x );
			}
			\foreach \x in {0,...,5}
			{	\draw[boldd] (\x ,0,5) -- (\x ,5,5) -- (\x ,5,0);
			}
			\draw[boldd] (0,0 ,5) -- (5,0 ,5) -- (5,0,0) -- (5,5,0) -- (0,5,0) -- (0,5,5) -- (0,0,5);
            \draw[->] (0,0,5) -- (0,0,7) node[above] {$y$};
            \draw[->] (0,5,0) -- (0,6,0) node[right] {$x$};
            \draw[->] (5,0,0) -- (6,0,0) node[above] {$z$};
		\end{tikzpicture}
		\caption{Step 1: cut the cube into vertical slices}\label{fig:step1}
	\end{subfigure}
	\hfill
	\begin{subfigure}[t]{.3\linewidth}
		\centering
		\begin{tikzpicture}[scale=0.5]
			\foreach \x in {0,...,5}
			{   \draw[color=black!20] (0,\x ,5) -- (5,\x ,5);
				\draw[color=black!20] (5,\x ,5) -- (5,\x ,0);
				\draw[color=black!20] (\x ,0,5) -- (\x ,5,5);
				\draw[color=black!20] (\x ,5,5) -- (\x ,5,0);
			}
			\foreach \x in {0,...,5}
			{
				\draw[boldd] (5,0,\x ) -- (5,5,\x ) -- (0,5,\x );
			}
			\draw[boldd] (0,0 ,5) -- (5,0 ,5) -- (5,0,0) -- (5,5,0) -- (0,5,0) -- (0,5,5) -- (0,0,5);
            \draw[->] (0,0,5) -- (0,0,7) node[above] {$y$};
            \draw[->] (0,5,0) -- (0,6,0) node[right] {$x$};
            \draw[->] (5,0,0) -- (6,0,0) node[above] {$z$};
		\end{tikzpicture}
		\caption{Step 2: cut the cube into vertical slices in the other direction}\label{fig:step2}
	\end{subfigure}
	
	\bigskip
	\begin{subfigure}{0.45\linewidth}
		\centering
		\begin{tikzpicture}[scale=0.5]
			\foreach \x in {0,...,5}
			{ 	\draw[color=black!20] (\x ,0,5) -- (\x ,5,5);
				\draw[color=black!20] (\x ,5,5) -- (\x ,5,0);
				\draw[color=black!20] (5,0,\x ) -- (5,5,\x );
				\draw[color=black!20] (0,5,\x ) -- (5,5,\x );
			}
			\foreach \x in {0,2,4,5}
			{ 	\draw[boldd] (0,\x ,5) -- (5,\x ,5) -- (5,\x ,0);
			}
			\draw[boldd] (0,0 ,5) -- (5,0 ,5) -- (5,0,0) -- (5,5,0) -- (0,5,0) -- (0,5,5) -- (0,0,5);
            \draw[->] (0,0,5) -- (0,0,7) node[above] {$y$};
            \draw[->] (0,5,0) -- (0,6,0) node[right] {$x$};
            \draw[->] (5,0,0) -- (6,0,0) node[above] {$z$};
		\end{tikzpicture}
		\caption{Step 3: cut the cube into horizontal slices of width 2}\label{fig:step3}
	\end{subfigure} 
	\hfill
	\begin{subfigure}{.45\linewidth}
		\centering
		\begin{tikzpicture}[scale=0.5]
			\foreach \x in {0,...,5}
			{ 	\draw[color=black!20] (\x ,0,5) -- (\x ,5,5);
				\draw[color=black!20] (\x ,5,5) -- (\x ,5,0);
				\draw[color=black!20] (5,0,\x ) -- (5,5,\x );
				\draw[color=black!20] (0,5,\x ) -- (5,5,\x );
			}
			\foreach \x in {0,1,3,5}
			{ 	\draw[boldd] (0,\x ,5) -- (5,\x ,5) -- (5,\x ,0);
			}
			\draw[boldd] (0,0 ,5) -- (5,0 ,5) -- (5,0,0) -- (5,5,0) -- (0,5,0) -- (0,5,5) -- (0,0,5);
            \draw[->] (0,0,5) -- (0,0,7) node[above] {$y$};
            \draw[->] (0,5,0) -- (0,6,0) node[right] {$x$};
            \draw[->] (5,0,0) -- (6,0,0) node[above] {$z$};
		\end{tikzpicture}
		\caption{Step 4: horizontal slices with a shift}\label{fig:step4}
	\end{subfigure}
	\RawCaption{\caption{Sorting network for $k = 2 n^{2/3}$ (here $n=125$, $k = 50$ and $t=5$)}
		\label{fig:3-dimensional}}
\end{figure}
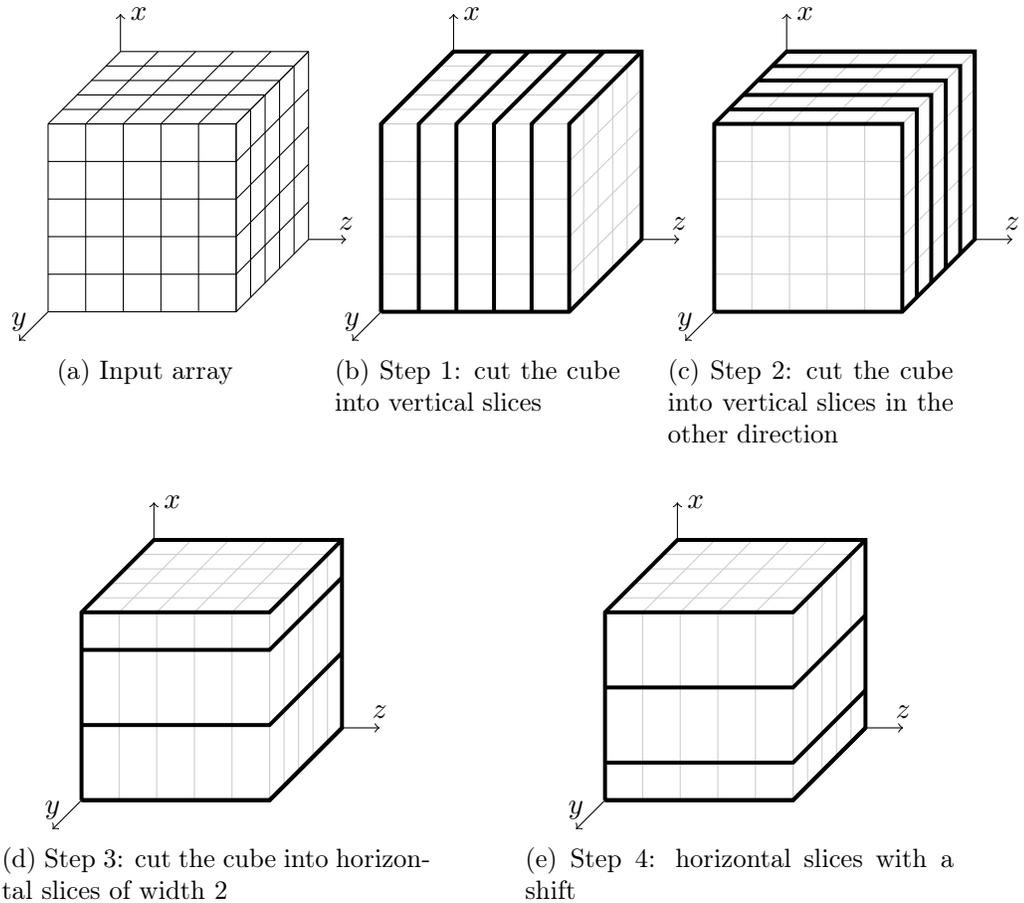

\begin{proof}
	We represent entries of an input array as a 3-dimensional cube with the side $t$ (see Figure~\ref{fig:input}). We place the first $t^2$ entries of an array in the bottom layer of the cube, the next $t^2$ entries in the second layer of the cube and so on. In each layer the entries are positioned row by row.

    To be more precise, assume that the array $A$ is enumerated as $[a_1, \ldots, a_n]$. We reenumerate the same array as
    $$
    [a_{111}, a_{112}, \ldots, a_{11t}, a_{121}, \ldots, a_{12t}, \ldots, a_{ttt}].
    $$
    That is, entries of an array are enumerated by sequences $(x,y,z) \in \{1,\ldots,t\}^3$ in the lexicographic order. In Figure~\ref{fig:3-dimensional} $a_{xyz}$ corresponds to a subcube with coordinates $(x,y,z)$.
	
	In the first layer of the sorting network we split the cube into vertical slices of width 1 and feed each slice to a $t^2$-comparator (see Figure~\ref{fig:step1}). To be more precise, for each $i=1,\ldots, t$ we feed entries $a_{xyi}$ for all $x,y$ into one comparator. On the second layer of the network we split the cube into vertical slices of width 1 in another direction and feed each slice to a $t^2$-comparator (see Figure~\ref{fig:step2}). In other works, for each $i=1,\ldots, t$ we feed entries $a_{xiz}$ for all $x,z$ into one comparator. On the third layer we split the cube into horizontal slices of width 2 (for odd $t$ the last slice is of width 1) and feed the slices to comparators of arity at most $2t^2$ (see Figure~\ref{fig:step3}). Finally, on the fourth layer of the network we split the cube into horizontal slices of width 2 again, but now the first slice is of width 1 (for even $t$ the last slice is of width 1 as well). Thus, the slices on this layer are shifted compared to the previous one (see Figure~\ref{fig:step4}).
	
	It remains to prove that this sorting network sorts correctly. Consider any input $x \in \{0,1\}^{n}$. Note that the cube consists of $t^2$ vertical columns with $t$ entries in each column: each column $A_{yz}$ is obtained by fixing $y$ and $z$ in $a_{xyz}$ and considering all possible $x$. We are interested in the weight $w_{yz}$ of each column, that is the number of 1s in it. For the input $A$ the weights of the columns can be any numbers from $0$ to $t$. Now consider the array after the first layer of the network. Note that now each vertical slice of the first layer of the network is sorted. This means that in each of these slices in the first several  rows (from bottom to top) there are only 0s, then there might be a row containing both 0s and 1s, and then all remaining rows contain 1s. In particular, the weights of two columns in the same slice differ by at most~1.
	
	Now consider the second layer of the network and consider two different slices $S_i = \{a_{x,i,z} \mid x,z \in [t]\}$ and $S_j = \{a_{x,j,z} \mid x,z \in [t]\}$. Note that each of them contains exactly one column from each slice of the first layer. We know that the weights of the columns in the same slice of the first layer differ by at most 1. Thus, in total, the number of 1s in two slices of the second layer differ by at most $t$. In other words, for each $z$ the first slice contains the column $A_{iz}$ and the second slice contains the column $A_{jz}$. We know that on the input of the second layer of the network $|w_{iz} - w_{jz}| \leq 1$. Thus, 
    $$
    |\sum_z w_{iz} - \sum_z w_{jz}| \leq t.
    $$
    Denote by $r_i$ the number of rows in slice $S_i$ that consists of only $1$s after the second layer of the network. We just showed that the slice $S_i$ can have one more extra row of 1s, one less row of 1s or something in between. Overall, for the number $r_j$ of rows consisting of $1$s in $S_j$ we have $|r_i - r_j| \leq 1$.
    As a result, the weights of columns in slices $S_i$ and $S_j$ can differ by at most 2. Since this is true for any $i$ and $j$, we have that the weights of all columns in the cube after the second layer of the sorting network differ by at most 2. 
    
    To put it another way, there is a horizontal slice of width 2, such that below this slice we have only 0s and above this slice we have only 1s. Thus it remains to sort entries of this slice. Note that on layers 3 and 4 of the network there is a comparator that sorts exactly this slice. Note that by Lemma~\ref{lem:sorting-monotonicity} all other comparators of layers 3 and 4 do not harm the sorting.
\end{proof}

This argument can be extended to the cubes of arbitrary dimension $d$. More specifically, for $n = t^d$ and for $k = (d-1)t^{d - 1}$ we can represent entries of an input array as a $d$-dimensional cube with side $d$, sort `vertical' slices (we need to fix one of the coordinates in $d$-dimensional space as vertical) in all $d-1$ directions and then sort horizontal slices. This results into $(d-1)$ layers of the sorting network and for horizontal slices we need recursive calls for the arrays of size approximately $2dt^{d-1}$. Actually, it is expensive to make two recursive calls for horizontal layers, instead we use an additional trick to make just one recursive call.

Although our $k$-sorting network construction can be expressed in terms of high dimensional hypercubes, we prefer to give a more general exposition, using a concept of \emph{$s$-sorted arrays}.  

\subsection{Merging $s$-Sorted Arrays}
The following definition plays a key role in our sorting network construction.
\begin{definition}
    A $0/1$-array $A$ of length $n$ is \emph{$s$-sorted} if there is an integer interval $I = \{i, \ldots, i+s-1\} \subseteq [n]$, such that $A[j]=0$ for $j<i$ and $A[j]=1$ for $j \geq i+s$. We call $I$ \emph{unsorted interval}.
\end{definition}

As an immediate corollary of Lemma~\ref{lem:sorting-monotonicity}, we get the following.
\begin{corollary} \label{cor:s-sorted-stability}
	Suppose a sorting network gets an $s$-sorted array with 
 unsorted interval $I$. Then the output is also $s$-sorted with $I$ as an unsorted interval.
\end{corollary}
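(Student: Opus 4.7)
The plan is to apply Lemma~\ref{lem:sorting-monotonicity} twice, once to the prefix of zeros and once to the suffix of ones of the input array. Let $I = \{i, \ldots, i+s-1\}$ be the unsorted interval of the input; by the definition of $s$-sortedness, the first $i-1$ cells contain only zeros and the last $n-i-s+1$ cells contain only ones.

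Since we work over $\{0,1\}^n$, the zeros are the smallest entries of the array. Thus the $i-1$ zeros sitting at positions $1, \ldots, i-1$ form a collection of $i-1$ smallest entries that is already positioned correctly (in the first $i-1$ cells). By Lemma~\ref{lem:sorting-monotonicity}, after the comparators of the sorting network they remain in those cells, so the first $i-1$ cells of the output still contain only zeros. Symmetrically, the $n-i-s+1$ ones in positions $i+s, \ldots, n$ are largest entries positioned correctly, and Lemma~\ref{lem:sorting-monotonicity} keeps them in place.

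Combining the two conclusions, the output array $A'$ satisfies $A'[j] = 0$ for $j < i$ and $A'[j] = 1$ for $j \geq i+s$, i.e., $A'$ is $s$-sorted with the same unsorted interval $I$. I do not anticipate any real obstacle: the statement is essentially a specialization of Lemma~\ref{lem:sorting-monotonicity} to the two ``flat'' ends of an $s$-sorted $0/1$-array, which is why the authors phrase it as an immediate corollary. The only mild subtlety is breaking ties when singling out ``the $i-1$ smallest entries'' among the many zeros in the array, but Lemma~\ref{lem:sorting-monotonicity} is invariant under this choice since all zeros are interchangeable.
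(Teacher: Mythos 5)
Your proof is correct and matches the paper's intent exactly: the authors state the corollary as an immediate consequence of Lemma~\ref{lem:sorting-monotonicity} with no further detail, and your argument---applying that lemma once to the prefix of zeros (the $i-1$ smallest entries) and once to the suffix of ones (the largest entries)---is precisely the intended derivation. The tie-breaking remark you flag is indeed harmless, since Lemma~\ref{lem:sorting-monotonicity} concerns positions rather than identities of the entries.
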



We give a construction of a depth-1 sorting network that ``merges'' $p$ arrays of length $n$ that are already $s$-sorted into one array which is $(sp + O(np^2/k))$-sorted, where $k$ is the arity of the sorting network.

\begin{lemma} \label{lem:mergin-s-sorted}
	Assume that $k \geq tp$ for some integers $t$ and $p$. Suppose we have $p$ $s$-sorted arrays of size $n$ each. Assume additionally that $n$ is divisible by $t$. Then there is a depth-1 $k$-sorting network that merges these arrays into one array of size $np$ that is $(sp+2\frac{np}{t})$-sorted. If additionally we assume that $s$ is divisible by $n/t$, then the resulting array is $(sp+\frac{np}{t})$-sorted.
\end{lemma}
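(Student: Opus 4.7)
The plan is to build a single layer of ``stride'' comparators. Since $n$ is divisible by $t$, partition the positions of each input array into $n/t$ strides, where stride $j$ consists of the positions $j, j + n/t, j + 2(n/t), \ldots, j + (t-1)(n/t)$. For each $j \in [n/t]$, introduce one comparator that takes the $t$ entries of stride $j$ from every array $A_1, \ldots, A_p$, for a total of $pt \le k$ inputs; its $\beta$-th smallest output (for $\beta \in [pt]$) is placed at position $(\beta - 1)(n/t) + j$ of the merged output array $B$ of length $np$. This one layer of $n/t$ comparators is the entire network.

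Let $c_j$ denote the number of $1$s among the $pt$ inputs of comparator $j$, so its sorted output is $0^{pt - c_j} 1^{c_j}$, and $B[(\beta - 1)(n/t) + j] = 1$ iff $\beta \ge pt - c_j + 1$. Reading $B$ in increasing order of $q = (\beta-1)(n/t) + j$, it is therefore identically $0$ on ``rows'' $\beta < pt - \max_j c_j + 1$ and identically $1$ on rows $\beta > pt - \min_j c_j$. A direct computation comparing the first $1$-position with the last $0$-position shows that the unsorted interval of $B$ lies within the remaining rows and has length at most $(\max_j c_j - \min_j c_j)\cdot (n/t)$, so it suffices to prove
\[
\max_j c_j - \min_j c_j \;\le\; p\,\bigl(1 + \lceil s/(n/t)\rceil\bigr).
\]

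To establish this, fix an array $A_i$ with unsorted interval $[l_i, l_i + s - 1]$, let $f_i(j)$ be $A_i$'s contribution to $c_j$, and decompose $f_i = f_i^0 + f_i^T$, where $f_i^0(j)$ counts the stride entries in the $1$-region $\{l_i + s, \ldots, n\}$ and $f_i^T(j)$ counts the $1$s on transition positions. As $j$ sweeps through $[n/t]$, the position of slot $\alpha \in [t]$ traces the interval $[1 + (\alpha - 1)(n/t),\, \alpha(n/t)]$; these $t$ intervals are disjoint and together cover $[1, n]$, so the threshold $l_i + s$ lies in at most one of them and only one slot can ever cross from the $0$/transition region to the $1$-region. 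Hence $\max_j f_i^0(j) - \min_j f_i^0(j) \le 1$. Meanwhile $f_i^T(j)$ cannot exceed the number of transition positions (i.e., positions in $[l_i, l_i + s - 1]$) inside stride $j$, and an elementary counting argument shows this is at most $\lceil s/(n/t)\rceil$. Combining these two facts via the triangle inequality and summing over the $p$ arrays gives the displayed bound.

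Substituting yields unsorted-interval length at most $p(n/t) + p\lceil s/(n/t)\rceil \cdot (n/t) \le p(n/t) + p(s + n/t) = sp + 2np/t$, matching the lemma; when $n/t$ divides $s$ the ceiling is exact, so $p\lceil s/(n/t)\rceil(n/t) = ps$ and the bound sharpens to $sp + np/t$. The main conceptual ingredient is the elementary ``single-crossing'' property of strides, which controls the per-array variation of $f_i^0$; the rest is bookkeeping with ceilings.
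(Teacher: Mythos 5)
Your construction is identical to the paper's: the ``strides'' are precisely the columns of the $t\times(n/t)$ row-major table the paper uses for each array (then stacked into a $tp\times(n/t)$ table), and placing the $\beta$-th smallest output of comparator $j$ at position $(\beta-1)(n/t)+j$ is exactly reading the column-sorted table back out row by row. Your analysis via the column-sum variation $\max_j c_j - \min_j c_j$, with the decomposition $f_i = f_i^0 + f_i^T$ and the single-crossing observation, is a slightly more explicit rephrasing of the paper's direct count of non-constant rows per small table, and both yield the same per-array bound $\lceil s/(n/t)\rceil + 1$, so the proposal is correct and essentially the same argument.
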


\begin{proof}
	Represent each array as a table with $\frac nt$ columns and $t$ rows. We assume the following ordering on the entries of this table: to compare two entries, we first compare the indices of their rows, and then the indices of their columns. Position the tables one under another in a unified table with $tp$ rows. Note that $tp \leq k$ and apply $k$-comparator to each column in parallel. We claim that the resulting array in the large table is $(sp+2\frac{np}{t})$-sorted.
	
	To see that observe, that in each small table, an unsorted interval of length at most $s$ occupies at most $\left\lceil \frac{st}{n} \right\rceil +1$ rows (any other row either consists entirely of 0s or entirely of 1s). In the large table, this gives us at most $p\left( \left\lceil \frac{st}{n} \right\rceil +1 \right)$ non-constant rows.  After sorting each column individually, 0-rows will move to the top, 1-rows will move to the bottom and all other $p\left( \left\lceil \frac{st}{n} \right\rceil +1 \right)$ rows will be in between on them. They constitute an unsorted interval and the size of it is at most
	$$
	\frac nt \cdot p \left( \left\lceil \frac{st}{n} \right\rceil +1 \right).
	$$
	
	For general $s$ we can upper bound this as follows
	$$
	\frac nt \cdot p \left( \left\lceil \frac{st}{n} \right\rceil +1 \right) \leq \frac nt \cdot p \left( \frac{st}{n} + 2\right) = sp + 2\frac{np}{t}.
	$$
	
	If $s$ is divisible by $n/t$, note that we can just drop the rounding operation and the size of an unsorted interval is at most
	$$
	\frac nt \cdot p \left( \left\lceil \frac{st}{n} \right\rceil +1 \right) = \frac nt \cdot p \left( \frac{st}{n} + 1\right) = sp + \frac{np}{t} .
	$$
\end{proof}

Applying previous lemma several times we get the following.

\begin{lemma} \label{lem:s-sorting}
	Consider arbitrary $n$ and $k$ and denote $t = \lfloor \sqrt{k} \rfloor$. Then there is a $k$-sorting network of depth $\lceil \log_t n \rceil - 1$ that on any input outputs an $s$-sorted array for $s \leq \frac{2 \left\lceil \log_{t} n \right\rceil n}{t}$.
\end{lemma}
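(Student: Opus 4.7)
The plan is to apply Lemma~\ref{lem:mergin-s-sorted} iteratively, after an initial sorting layer. Set $d = \lceil \log_t n \rceil$; for clarity, I assume $n = t^d$ (the general case follows by padding the input with $1$s up to the next power of $t$, running the construction below, and restricting to the first $n$ output positions, which preserves $s$-sortedness because the padded $1$s are largest and are pushed to the top of the output).

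The first layer of the network applies a $t^2$-sorter (of arity $t^2 \le k$) to each disjoint block of $t^2$ consecutive entries, producing $n/t^2$ fully sorted, i.e., $0$-sorted, blocks of size $t^2$. Each of the remaining $d - 2$ layers invokes Lemma~\ref{lem:mergin-s-sorted} with $p = t$ and with the internal parameter of that lemma also set to $t$, so every group of $t$ adjacent blocks is merged into a single block. Both hypotheses of Lemma~\ref{lem:mergin-s-sorted} are immediate: $k \ge t^2 = tp$ because $t = \lfloor\sqrt{k}\rfloor$, and the current block size is always a positive power of $t$, hence divisible by $t$.

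Let $n_j$ and $s_j$ denote the block size and sortedness parameter after $j$ merging layers, so that $n_0 = t^2$ and $s_0 = 0$. The recurrence supplied by Lemma~\ref{lem:mergin-s-sorted} is $n_{j+1} = t\, n_j$ and $s_{j+1} \le t\, s_j + 2 n_j$, and a straightforward induction yields $s_j \le 2j \cdot t^{j+1} = 2j\, n_j / t$. We stop at $j = d - 2$, when the single remaining block has size $n_{d-2} = t^d = n$ and is $s$-sorted with
$$
s \le 2(d-2) n/t \le \frac{2 \lceil \log_t n \rceil n}{t}.
$$
The total depth is $1 + (d-2) = \lceil \log_t n \rceil - 1$, as required.

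There is no serious obstacle, but the one point worth flagging is the choice of initial block size $t^2$ (rather than $t$). If one tried to skip the initial sorting layer and start by merging $t$ blocks of size $t$ (each trivially $t$-sorted), the recurrence $s_{j+1} \le t s_j + 2 n_j$ would give $s_j > n_j$ at every step, making the $s$-sorted guarantee vacuous. Doing one dense $t^2$-sort up front is exactly what brings $s_0$ down to $0$ and lets the recurrence produce the claimed bound with the right constant and the right depth $\lceil \log_t n \rceil - 1$ rather than $\lceil \log_t n \rceil$.
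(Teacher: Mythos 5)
Your argument for the case $n = t^d$ is correct and follows the same blueprint as the paper (an initial layer of $t^2$-sorters followed by $d-2$ merging layers with $p=t$), and your recurrence analysis checks out: $s_j \le 2j\,n_j/t$, giving $s \le 2(d-2)n/t$ at depth $d-1$. Using the weaker $(sp + 2np/t)$-bound from Lemma~\ref{lem:mergin-s-sorted} instead of the tighter divisibility-based one is harmless here.

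However, the reduction of the general case to the case $n = t^d$ by padding ``up to the next power of $t$'' does not work. Your bound on $s$ after the merging phase is $2(d-2)N/t$ where $N = t^d$ is the padded size, and restricting to the first $n$ coordinates does not shrink this bound. When $n$ is only slightly larger than $t^{d-1}$, we have $N \approx tn$, and $2(d-2)N/t \approx 2(d-2)n$, which vastly exceeds the claimed $\frac{2\lceil\log_t n\rceil n}{t}$ for any $t$ that is not tiny. Concretely, for $n = t^{d-1}+1$ the claimed bound is about $2d\,t^{d-2}$ while your construction only guarantees about $2(d-2)t^{d-1}$, off by a factor of nearly $t$. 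The paper avoids this by padding only up to $t^{d-1}p$ with $p = \lceil n/t^{d-1}\rceil \le t$ (so the padded length is at most $2n$), and by merging only $p$ blocks rather than a full $t$ in the final layer; this keeps the resulting bound proportional to $n$ rather than to $t^d$. Alternatively, one could salvage your padding scheme by observing that the all-ones padding blocks contribute zero non-constant rows in the merging step and carrying that saving through the recurrence, but as written your proof invokes Lemma~\ref{lem:mergin-s-sorted} with the generic bound and does not do this.
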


\begin{proof}
	Denote $d = \lceil \log_t n \rceil$ and observe that $n \leq t^d$. Introduce the following notation:
	$$
	n_i = \begin{cases}
		t^{i+1} & \text{for } i=1,\ldots, d-2,\\
		t^{d-1} p & \text{for } i=d-1,
	\end{cases}
	$$
	where $p$ is such that $t^{d-1} (p-1) < n \leq t^{d-1} p$. In particular, since $p\geq 2$, we have $p-1 \geq p/2$ and
	$$
	n > t^{d-1} (p-1) \geq t^{d-1} p/2.
	$$
	For the convenience of presentation, we add $t^{d-1} p - n$ dummy inputs equal to 1 to the end of the array to make the size of the input to be equal to $t^{d-1} p$. By Lemma~\ref{lem:sorting-monotonicity} these inputs will never change their position and can be removed from the sorting network.
	
	We start with an unsorted array as an input and repeatedly apply Lemma~\ref{lem:mergin-s-sorted} to get the array consisting of blocks that are $s$-sorted for some $s$. More specifically, after level $i$ of the network we get the blocks of size $n_i$ that are $s_i$ sorted for
	$$
	s_i = \begin{cases}
		(i-1)t^{i} & \text{for } i=1,\ldots, d-2,\\
		(d-2)t^{d-2} p & \text{for } i=d-1.
	\end{cases}
	$$
	On the first step we split the input into blocks of size $t^2$ and apply comparators to them, the resulting blocks are $0$-sorted.
	
	On the $i$-th step for $i=2, \ldots, d-1$ we already have blocks of size $n_{i-1} = t^i$ from the previous step that are $s_{i-1}$-sorted for $s_{i-1} = (i-2)t^{i-1}$. Note that $n_{i-1} = t^i$ is divisible by $t$ and $s_{i-1}$ is divisible by $n_i/t = t^{i-1}$. We apply Lemma~\ref{lem:mergin-s-sorted} and for $i<d-1$ get blocks of size $n_{i-1} t = n_{i}$ that are $s$-sorted for $s = s_{i-1}t + n_{i-1} = (i-1)t^{i}$. For $i=d-1$ we have just $p$ subarrays to merge and after the step we get the whole array of size $t^{d-1}p$ that is $s$-sorted for $s = (d-3)t^{d-2} p + \frac{t^{d-1}p}{t} = (d-2)t^{d-2}p$.
	
	Finally, observe that
	$$
	s \leq (d-2)t^{d-2} p \leq d \frac{2n}{t}
	$$
	as desired.
\end{proof}

\subsection{Computing Majority}

Before constructing a sorting network we solve a simpler task of computing majority function.

\begin{theorem} \label{thm:maj_upper_bound}
For any $n$ and for any $k$ such that $\log k = \omega(\log \log n)$ (or, to put it differently, $k$ is growing faster than any $\polylog(n)$), there exists a $\MAJ_k$-circuit for $\MAJ_n$ of depth at most $(2+o(1))\log^2_k n$.
\end{theorem}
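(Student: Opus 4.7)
The plan is to prove Theorem~\ref{thm:maj_upper_bound} by a recursion whose single step first $s$-sorts the input via Lemma~\ref{lem:s-sorting} and then restricts attention to a short centered window. I will first establish the following reduction: if $A$ is an $s$-sorted array of odd length $n$ and $W = (A[\lceil n/2 \rceil - s], \ldots, A[\lceil n/2 \rceil + s])$ is the centered window of length $2s+1$, then $\MAJ_n(A) = \MAJ_{2s+1}(W)$. The proof is a short case analysis on where the unsorted interval $I$ lies: when $I \subseteq W$, the $\lfloor n/2 \rfloor - s$ positions to the left of $W$ are all $0$ and the same number of positions to the right of $W$ are all $1$, so they cancel and the majority is decided inside $W$; when $I$ extends past $W$ on one side, the extreme counts force the majority of $A$ to be a specific value which the majority of $W$ also computes.

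Given this reduction, I build the circuit recursively. At each level I apply Lemma~\ref{lem:s-sorting}, but with the comparator arity taken to be $\lfloor k/2 \rfloor$ rather than $k$ (so $t = \lfloor \sqrt{k/2} \rfloor$). The resulting $s$-sorting circuit has depth $\lceil \log_t n \rceil - 1 = (2+o(1))\log_k n$ and produces an $s$-sorted array with $s \le 2\lceil \log_t n \rceil \cdot n/t = O(n \log_k n / \sqrt{k})$; by Lemma~\ref{lem:sorting-to-maj}, each $\lfloor k/2 \rfloor$-comparator is replaced by $\MAJ_k$-gates at the same depth. I then keep only the $2s+1$ entries of the centered window and compute their majority recursively. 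The base case $n \le k$ is handled by one $\MAJ_k$-gate padded with equal numbers of $0$s and $1$s.

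For the depth analysis, let $\alpha_i = \log_k n_i$ be the logarithm of the input size at the $i$th recursion level. A single step sends $n_i$ to $n_{i+1} = O(n_i \log_k n_i / \sqrt{k})$, so $\alpha_{i+1} = \alpha_i - 1/2 + o(1)$, where the $o(1)$ absorbs a $\log_k \log_k n_i$ correction which is $o(1)$ precisely because $\log k = \omega(\log \log n)$. Hence the recursion terminates after $j = (2+o(1))\log_k n$ levels, and the per-level depths form an arithmetic progression from roughly $(2+o(1))\log_k n$ down to $0$. Summing (equivalently, taking the area of the corresponding triangle) yields
\[
T(n) \le \sum_{i=0}^{j-1} (2+o(1))\log_k n_i = (2+o(1)) \cdot \tfrac{1}{2} \cdot j \cdot \log_k n = (2+o(1))\log_k^2 n.
\]

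The main subtlety will be controlling the two $o(1)$ sources simultaneously --- the $\log(k/2)/\log k = 1 + o(1)$ loss from converting $\lfloor k/2 \rfloor$-comparators to $\MAJ_k$-gates, and the $\log_k \log_k n$ slack per recursion level --- so that they collapse to a single $o(1)$ factor in the final bound. The hypothesis $\log k = \omega(\log \log n)$ is exactly what is needed: it makes $\log(k/2) = (1+o(1))\log k$ and $\log_k \log_k n = o(1)$, keeping every estimate within the desired $(2+o(1))\log_k^2 n$ bound without hiding a spurious constant factor.
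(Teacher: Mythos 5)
Your proposal is correct and follows essentially the same strategy as the paper: apply Lemma~\ref{lem:s-sorting} to reduce to an $s$-sorted array, observe that the majority is determined by a centered window of length $O(s)$, and recurse, summing an arithmetic progression of per-level depths (which is exactly what the paper's Lemma~\ref{lem:recur} encapsulates). The two small refinements you make --- proving the clean identity $\MAJ_n(A)=\MAJ_{2s+1}(W)$ for all $s$-sorted $A$ rather than invoking the minterm/maxterm reduction, and explicitly dropping to arity $\lfloor k/2\rfloor$ so that Lemma~\ref{lem:sorting-to-maj} yields genuine $\MAJ_k$ gates --- tighten the exposition but do not change the route.
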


The rest of the section is devoted to the proof of Theorem~\ref{thm:maj_upper_bound}.

First observe that to compute $\MAJ_n$ correctly by a monotone circuit it is enough to compute it correctly on minterm and maxterm inputs: the computation on other inputs follows by monotonicity. Thus, we can assume in our construction that the input contains almost the same number of 0s and 1s. We will construct a sorting network that sorts all such inputs correctly. From the sorting network we get the circuit of the same depth.

Suppose we need to sort an array of size $n$ with approximately the same number of 0s and 1s. We apply Lemma~\ref{lem:s-sorting} to the array. This results in a $Y$-sorted array for 
$Y = \frac{2 \left\lceil \log_{t} n \right\rceil n}{t}$ for $t=\lfloor \sqrt{k} \rfloor$.
Since the number of 0s and 1s in the array is approximately equal, the smallest $\frac n2 - Y$ and the largest $\frac n2 -Y$ elements are sorted correctly (otherwise, the length of unsorted interval is larger than $Y$). Thus, it remains to sort specific interval of length $2Y$ and we can do it recursively. 

Overall, we get the following recursive relation.
$$
T(n) \leq \lceil \log_t n \rceil-1 + T\left(2Y\right) \leq 
\log_t n + T\left(\frac{4 \lceil \log_t n \rceil n}{t}\right).
$$

To solve this recursive relation we use the following lemma.

\begin{lemma}\label{lem:recur}
	Assume that $\log k = \omega(\log \log n)$. Suppose that $T(n) = \const$ for $n$ up to some constant and
	$$
	T(n) \leq 2 \log_{k} n  + C + T\left(\left\lceil\frac{D  (\log_k n) n}{\sqrt{k}}\right\rceil\right)
	$$
	for some constants $C$ and $D>0$. Then $T(n) \leq (2 + o(1))\log_k^2 n$.
\end{lemma}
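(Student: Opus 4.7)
The plan is to unroll the recursion explicitly. Define $n_0 = n$, $n_{i+1} = \lceil D(\log_k n_i)\, n_i/\sqrt{k}\rceil$, and $L_i = \log_k n_i$. Then $T(n) \leq \sum_{i=0}^{M-1} (2 L_i + C)$, where $M$ is the first index with $n_M$ below the base-case constant. Everything reduces to (a) showing that $L_i$ decreases by roughly $1/2$ per step, which bounds $M$, and (b) summing the resulting arithmetic-like series.

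The core estimate is the one-step shrinkage of $L_i$. As long as $D L_i n_i/\sqrt{k} \geq 1$ (otherwise the next step already reaches the base case),
$$L_{i+1} \leq \log_k\!\left(\tfrac{D L_i n_i}{\sqrt{k}} + 1\right) \leq \log_k\!\left(\tfrac{2 D L_i n_i}{\sqrt{k}}\right) = L_i - \tfrac{1}{2} + \log_k(2D L_i).$$
Since $L_i \leq L_0 = \log_k n$, it suffices to show $\log_k(2D L_0) = o(1)$. Expanding, $\log_k L_0 = \log_k \log_k n = (\log\log n - \log\log k)/\log k \leq \log\log n/\log k$, which is $o(1)$ precisely under the hypothesis $\log k = \omega(\log\log n)$; the additive term $\log_k(2D) = O(1/\log k)$ is also $o(1)$. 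Hence $L_{i+1} \leq L_i - \tfrac{1}{2} + \delta$ where $\delta = \delta(n) = o(1)$ uniformly in $i$.

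From the recurrence $L_{i+1} \leq L_i - (\tfrac{1}{2} - \delta)$ we immediately get $L_i \leq L_0 - i(\tfrac{1}{2} - \delta)$ and therefore $M \leq L_0/(\tfrac{1}{2} - \delta) = (2 + o(1))\log_k n$. Then
$$\sum_{i=0}^{M-1} 2 L_i \;\leq\; 2\sum_{i=0}^{M-1}\bigl(L_0 - i(\tfrac{1}{2} - \delta)\bigr) \;=\; 2 M L_0 - M(M-1)(\tfrac{1}{2} - \delta) \;=\; (1+o(1))\, L_0^2 \cdot 2 \;=\; (2+o(1))\log_k^2 n,$$
and the remaining $C\cdot M = O(\log_k n)$ contribution is absorbed: if $\log_k n \to \infty$ it is $o(\log_k^2 n)$, while if $\log_k n = O(1)$ the whole recursion terminates in $O(1)$ steps and both sides of the target inequality are $O(1)$.

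The main obstacle is simply accounting: making sure the $o(1)$ losses from the ceiling, from the $\log_k(2D)$ constant, and from $\log_k L_0$ are all controlled by the hypothesis $\log k = \omega(\log\log n)$, and confirming that this hypothesis is exactly what is required so that a single uniform $\delta(n) = o(1)$ works for every index $i$ in the unrolling. Beyond that, the argument is just the arithmetic series above.
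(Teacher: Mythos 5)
Your proof is correct and takes essentially the same approach as the paper: the paper fixes the single shrinkage factor $\alpha = \sqrt{k}/(D\log_k n)$ up front and unrolls $T(n)\le 2\sum_i(\log_k(n/\alpha^i)+C)$, whereas you track $L_i=\log_k n_i$ directly and show it drops by $\tfrac12-o(1)$ per step; noting that $\log_k\alpha = \tfrac12 - \log_k(D\log_k n) = \tfrac12 - o(1)$ shows these are the same computation. Your version is a bit more careful about the uniformity of the $o(1)$ term over the index $i$ (via $L_i\le L_0$), but both arguments hinge on the same use of $\log k=\omega(\log\log n)$ to kill $\log_k\log_k n$, and both absorb the ceiling in the same way, so I would classify this as the same proof.
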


\begin{proof}
	To simplify the presentation, we ignore rounding of the argument of $T$ first, and address it later. Denote $\alpha = \frac{\sqrt{k}}{D \log_k n}$.
	
	
	We have
	\begin{align*}
		T(n) &\leq 2\log_k n + C + T\left(\frac{n}{\alpha}\right)\\
		&\leq 2\log_k n + C + 2\log_k \frac{n}{\alpha} + C + T\left(\frac{n}{\alpha^2}\right)\\
		&\leq 2 \sum_{i=0}^{\log_{\alpha} n}  \left(\log_k \frac{n}{\alpha^i} + C \right)\\
		&= 2 \left( \log_k n + (\log_k n - \log_k \alpha) + (\log_k n - 2 \log_k \alpha) + \ldots + 0 \right) + 2 C \log_{\alpha} n\\
		&\le 2 \frac{\log_{k} n}{\log_k \alpha}  \frac{\log_k n}{2} + 2 C \log_{\alpha} n=
		\log_{k}^2 n \log_{\alpha} k + 2 C \log_{\alpha} n.
	\end{align*}

    It is easy to see that the term $2 C \log_{\alpha} n$ is negligible, since $\alpha \gg k^{1/3}$.
	
	We analyze $\log_{\alpha} k$ factor separately:
	\begin{align*}
		\log_{\alpha} k &= \log_{\frac{\sqrt{k}}{D  \log_k n }} k 
		= \frac{\log_2 k}{\log_2 \frac{\sqrt{k}}{D \log_k n }} 
		\leq \frac{\log_2 k}{\frac 12 \log_2 k - D - \log_2 \log_k n}\\
		&= \frac{\log_2 k}{\frac 12 \log_2 k - D - \log_2 \log_2 n + \log_2 \log_2 k}.
	\end{align*}
	
	For $\log k = \omega(\log \log n)$ this term is $2+o(1)$ and we have
	$$
	T(n) \leq  (2+o(1))\log_k^2 n.
	$$
	
	To address the rounding operation, note that $\left\lceil \frac{n}{\alpha}\right\rceil \leq \frac{n}{\alpha} + 1 \leq \frac{2n}{\alpha}$ for $\frac{n}{\alpha} \geq 1$. Thus, in the presence of rounding we will have $\sum_i \log_k \frac{2n}{\alpha}$ in the calculation above instead of $\sum_i \log_k \frac{n}{\alpha}$. This amounts to substituting $D$ by $2D$ and does not change the result of the calculation since $D$ is an arbitrary constant.
\end{proof}

\subsection{Constructing Sorting Network}
\label{sec:finish-sorting-network}

In this section, we finish the proof of Theorem~\ref{thm:sorting_upper_bound}. 

We adopt the same strategy as for the computation of majority. More specifically, we apply Lemma~\ref{lem:s-sorting} recursively to get $s$-sorted array for smaller and smaller $s$. However,  now our task is more tricky. In the proof of Theorem~\ref{thm:maj_upper_bound} when we get to an $s$-sorted array we know exactly where the unsorted interval is located (in the middle of the array). However, now we need to sort arbitrary input arrays and an unsorted interval can be anywhere.

We construct the network recursively. We assume that at the beginning of each step, we have an $s$-sorted array (at the beginning of the process $s=n$). Denote the unsorted interval by $A$, $|A| \leq s$.
Split the array into consecutive blocks $B_1,\ldots, B_p$ of size $s$ (the last block $B_p$) might be smaller. 
	
The recursive step consists of two stages. In the first stage, we split the array into blocks $B_1 \cup B_2$, $B_3, \cup B_4$, and so on, each block of size $2s$ (one last block might be smaller). In the second stage, we split the array into blocks $B_1$, $B_2 \cup B_3$, $B_4 \cup B_5$, and so on (again the last block might be smaller than $2s$). Before describing each of the stages, observe that either in the first stage or in the second stage (or in both) the interval $A$ falls completely into one of the blocks. Indeed, $A$ can intersect with at most two consecutive blocks $B_i$, $B_{i+1}$ and in one of the stages, they form a single block.

In the first stage, we apply Lemma~\ref{lem:s-sorting} to each of the blocks $B_1 \cup B_2,B_3, \cup B_4, \ldots$ separately. As a result, each block is $s'$-sorted for $s' \le \frac{4 \left\lceil \log_{\lfloor \sqrt{k} \rfloor} n \right\rceil s}{\lfloor \sqrt{k} \rfloor}$. Moreover, if the block consisted of only 0s and 1s, then it does not change.

If $A$ is contained in one of the blocks of the first stage, we are already done: there is only one initially unsorted block that by Lemma~\ref{lem:s-sorting} after the step is $s'$-sorted. By Corollary~\ref{cor:s-sorted-stability} this property remains true after the additional comparators we apply for the other case.
	
If $A$ is split between two blocks of the first stage, then after the stage we have two consecutive unsorted blocks, each of them is $s'$-sorted. Denote unsorted parts by $C_1, C_2$. Note that by Corollary~\ref{cor:s-sorted-stability} $C_1,C_2 \subseteq A$ and thus, $C_1$ and $C_2$ fall into one block of the second stage. It is tempting to apply Lemma~\ref{lem:s-sorting} to the blocks of the second stage as well. However, this application is too expensive and will not result in the desired bound.
	
Instead we do the following. We represent each block of the second stage (of size at most $2s$) as a table with $p = \left\lceil2s/k\right\rceil$ columns and $k$ rows, filled in row by row from top to bottom. For convenience, if the last row is not complete, add dummy variables equal to 1 to complete the row.
	
Each of the intervals $C_1, C_2$ occupy at most $\left\lceil s'/p\right\rceil + 1$ rows. There might be another row that contain a switch between blocks $B_i$ and $B_{i+1}$. Each other row consist either entirely of 0s, or entirely of 1s. Denote the number of all 0 rows by $a$ and the number of all 1 rows by $b$.
	
We apply a comparator to each column separately. As a result, each column will contain $a$ zeros in the beginning, $b$ ones in the end and some part in between. The number of rows in the middle part is at most $2 \left\lceil s'/p\right\rceil + 3$. The number of entries in these rows is at most
$$
s'' = p(2 \left\lceil s'/p\right\rceil + 3) \leq 2s' + 5p \leq 3s'
$$
for large enough input size.
Thus, after the second stage we get $s''$-sorted array and we are done with the recursion step.
	
Thus, we get that $s'' \leq 12 \frac{\left\lceil \log_{t} n \right\rceil s}{t}$ and we get the following recursive relation
$$
T(n) \leq 
\log_t n + T\left(\frac{12 \lceil \log_t n \rceil n}{t}\right).
$$
We apply Lemma~\ref{lem:recur} again to get $T(n) \leq (2 + o(1))\log_k^2 n$.	

This finishes the proof of Theorem~\ref{thm:sorting_upper_bound}.

\subsection{Other Applications}

In this section we give two more examples of results that follow from our construction.

\begin{lemma} \label{lem:depth-4-majority}
    There is a $\MAJ_k$-circuit of depth $4$ computing $\MAJ_n$ for $k = O(n^{3/5})$.
\end{lemma}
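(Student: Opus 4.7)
Plan: I prove the lemma by applying the 3D hypercube construction from Theorem~\ref{thm:sorting_upper_bound} in a nested two-level way, in each level using the minterm/maxterm simplification from the proof of Theorem~\ref{thm:maj_upper_bound} (so that only inputs with $|x|\in\{\lfloor n/2\rfloor,\lceil n/2\rceil\}$ need to be handled correctly). The parameters are chosen so that every comparator and gate has arity $O(n^{3/5})$.

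First, represent the input as an outer $3$-dimensional hyperrectangle of dimensions $a\times b\times c$ with $a=n^{1/5}$ and $b=c=n^{2/5}$, so that the two outer slice-sorting layers use comparators of size $ab=ac=n^{3/5}$. Applying the cube construction for $2$ layers gives, as in Section~\ref{sec:k-sorting-network}, that all column weights in direction~$1$ differ by at most~$2$; the array is fully sorted outside an unsorted slab of width~$2$ (of total size $2bc=2n^{4/5}$). Combined with the minterm/maxterm argument, the computation of $\MAJ_n$ reduces to determining whether $R_{\lceil a/2\rceil}+R_{\lceil a/2\rceil+1}>bc$, i.e.\ a $\MAJ_{2n^{4/5}}$ on the two middle rows of the outer cube.

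Next, I observe a crucial piece of inherited structure: because outer layer~$2$ fully sorts each slice (fixing the second coordinate) in lex order, the $2c$ residual bits inside any one such outer slice are automatically sorted, taking the form ``$(c-\tau_j)$ zeros followed by $(\tau_j+c)$ ones'' for some $\tau_j\in[-c,c]$. Thus the residual $2n^{4/5}$-bit problem already comes equipped with one ``free'' direction of sortedness.

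For the remaining two layers, I apply the same cube recipe a second time to the residual problem, using an inner $3$-dimensional hyperrectangle of dimensions $a'\times b'\times c'$ with $a'=O(n^{1/5})$ and $b'=c'=O(n^{3/10})$, so that $a'b'c'=2n^{4/5}$, the inner slice sizes are $a'b'=a'c'=O(n^{1/2})$, and the inner slab has size $2b'c'=O(n^{3/5})$. The inner cube is laid out so that the coordinate along which the residual bits are already sorted coincides with one of the inner slice-sorting directions. This lets me get away with a single inner slice-sort in layer~$3$ (instead of the usual two), and layer~$4$ is then a single $\MAJ_{O(n^{3/5})}$ gate on the inner slab, applied via the same minterm/maxterm reduction as in the outer construction. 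The total depth is $2+1+1=4$ and every gate has arity $O(n^{3/5})$.

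The main technical obstacle is the third layer: a naive recursion would call for a depth-$3$ sub-majority (giving total depth~$5$), and justifying that one inner slice-sort is enough requires carefully verifying that the pre-sortedness within outer slices, combined with the inner cube's sortedness analysis and the second use of the minterm/maxterm trick, really does let the final single $\MAJ_k$ gate correctly decide the residual inequality $R_{\lceil a/2\rceil}+R_{\lceil a/2\rceil+1}>bc$.
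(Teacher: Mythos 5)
Your outer two layers are fine: the $3$-dimensional hyperrectangle $a\times b\times c$ with $a=n^{1/5}$, $b=c=n^{2/5}$ and two layers of slice sorts (sizes $ab=ac=n^{3/5}$) does make all vertical column weights differ by at most $2$, so the residual is a $2$-row slab of $2bc=2n^{4/5}$ cells around the median. But the inner half of the plan has a genuine gap, and it is exactly the one you flag yourself at the end. The ``pre-sortedness'' you identify lives inside each outer $y$-slice, and in the row-major linear order of the array these $2c$-cell sorted pieces are \emph{interleaved}, not contiguous: the slab consists of rows $r,r+1$, and the piece for a fixed $y$ occupies two separated length-$c$ runs inside those rows. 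Consequently the pieces cannot be identified with axis-parallel slices of an inner $a'\times b'\times c'$ hyperrectangle with the dimensions you propose ($a'b'c'=2n^{4/5}$, $b'=c'=n^{3/10}$): a piece has length $\Theta(n^{2/5})$, which neither fits inside a $z'$-line of length $c'=n^{3/10}$ nor fills a $z'$-slice of size $a'b'=n^{1/2}$, and in any case its cells do not occupy a coordinate slab. So ``lay out the inner cube so the pre-sorted coordinate is a slice direction'' is not actually a well-defined step, and without it your layer~3 is unjustified; you would be stuck with the generic two inner slice sorts and a total of five layers.

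The paper's own proof avoids the geometric recursion entirely and runs purely in the $s$-sorting framework. It pads to $r^5$ with $r=\lceil n^{1/5}\rceil$ and takes $k=4r^3$. Layer~1 sorts blocks of size $r^3$. Layer~2 applies Lemma~\ref{lem:mergin-s-sorted} with $p=r$, $t=r^2$, merging $r$ sorted $r^3$-blocks into $r^4$-blocks that are $r^2$-sorted. Layer~3 applies Lemma~\ref{lem:mergin-s-sorted} again with $p=r$, $t=r^2$, merging $r$ of these into the full array, now $2r^3$-sorted. Layer~4 is a single $4r^3$-comparator on the middle of the array, which works because the minterm/maxterm reduction pins the unsorted interval near the median. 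The key difference is that the merge lemma (applied to \emph{contiguous} blocks) does the bookkeeping cleanly, whereas your attempt to re-use the cube picture recursively forces you to claim a structural alignment that the interleaving destroys. If you want to salvage the spirit of your Step~3, you would have to argue directly about the $2\times b\times c$ slab (whose $y$-slices are each internally sorted) and exhibit one layer of $k$-comparators that brings it to $O(n^{3/5})$-sorted; that argument is not present, and it is not an immediate consequence of Lemma~\ref{lem:mergin-s-sorted} because that lemma is stated for concatenated, not interleaved, sorted pieces.
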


\begin{proof}
    Denote $r = \lceil n^{1/5} \rceil$. For simplicity we pad the input with constants 0 and 1 to make the size of the array $r^5$ without changing the output of majority. We will use $k$-sorters for $k = 4r^3$.

    As in the proof of Theorem~\ref{thm:maj_upper_bound} it is enough to compute $\MAJ_n$ on minterms and maxterms, thus we can assume that there are approximately equal number of 0s and 1s in the input. 

    We will build a $k$-sorting network and the existence of the circuit follows. On the first layer of the network we split the input into blocks of size $r^3$ and sort them. On the second layer we use Lemma~\ref{lem:mergin-s-sorted} with $p=r$ and $t=r^2$. As a result we get blocks of size $r^4$ that are $r^2$-sorted. On the third level we apply Lemma~\ref{lem:mergin-s-sorted} again with the same values of $p$ and $t$. As a result we have that the whole input is now $2 r^3$-sorted. On the last layer of the network just as in the proof of Theorem~\ref{thm:maj_upper_bound} we apply $4r^3$-comparator to the middle of the array.
\end{proof}

In~\cite{Knuth98} Knuth posed a problem of constructing a minimal depth $k$-sorting network for the input of size $k^2$. Parker and Parbery~\cite{ParkerP89} gave a construction of depth $9$. Here we slightly improve on this at the cost of using comparators of size $O(k)$.
\begin{lemma}
    There is a $k$-sorting network of depth $8$ that sorts an array of size $n$ with $k = O(n^{1/2})$.
\end{lemma}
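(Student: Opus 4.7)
The plan is to apply the tools from Section~\ref{sec:k-sorting-network} in three phases, yielding a depth-$(3+3+2)=8$ network. Throughout, I use comparators of size $k = C\sqrt n$ for a sufficiently large absolute constant $C$ (so $k = O(\sqrt n)$), and set $t = \lfloor\sqrt{k}\rfloor$, which satisfies $t \approx \sqrt C \cdot n^{1/4}$.

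Phase 1 (depth 3). Apply Lemma~\ref{lem:s-sorting}. For $n$ large enough relative to $C$, one has $\lceil\log_t n\rceil = 4$, so the depth is $3$ and the output is $s_1$-sorted with $s_1 \le 8n/t = O(n^{3/4})$.

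Phase 2 (depth 3). Apply one iteration of the recursion step from the proof of Theorem~\ref{thm:sorting_upper_bound} (Section~\ref{sec:finish-sorting-network}). The crucial point is that by choosing $C$ sufficiently large (specifically $C \ge 4$ suffices, so that $t^4 \ge 16 n$ and hence $2 s_1 \le t^3$), Stage~$1$ (applying Lemma~\ref{lem:s-sorting} to each pair-block of size~$2s_1$) requires depth only $\lceil\log_t(2s_1)\rceil - 1 = 3 - 1 = 2$ rather than the generic~$3$ — this is where one layer is saved. Stage~$2$ (one layer of column sorts on the shifted partition's blocks) adds one more layer. After Phase~$2$ the array is $s_2$-sorted with $s_2 = O(k)$.

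Phase 3 (depth 2). Partition the array into blocks of length $L = 3 s_2 + 1 = O(k)$, and also into a shifted partition with the same block length offset by $L/2$. In each partition apply a single $L$-comparator per block; this is two layers total. Since $L > 2 s_2$, the unsorted interval (of length at most $s_2$) fits entirely inside some block in at least one of the two partitions, and that block is then fully sorted; by Lemma~\ref{lem:sorting-monotonicity} the already-sorted parts of the array are not disturbed by either layer. All comparators throughout the construction have size $O(\sqrt n)$, giving the desired $k$-sorting network with $k = O(n^{1/2})$.

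The only real obstacle is the trade-off on $C$: we need $t^3 \ge 2 s_1$ so that Phase~$2$ Stage~$1$ has depth $2$ (its naive depth is $3$, which would push the total to $9$, matching the Parker--Parbery bound). This forces $C \ge 4$ or so, and is precisely why comparators of size a constant multiple of $\sqrt n$, rather than $\sqrt n$ itself, are used — exactly the trade-off alluded to in the statement of the lemma.
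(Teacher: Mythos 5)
Your proposal is correct and follows essentially the same route as the paper's proof: three layers from Lemma~\ref{lem:s-sorting} to reach an $O(n^{3/4})$-sorted array, then one iteration of the recursion step of Section~\ref{sec:finish-sorting-network} (two layers of Lemma~\ref{lem:s-sorting} on pair-blocks plus one merging layer) to reach $O(k)$-sorted, and two shifted single-layer partitions to finish, for a total of $3+3+2=8$. The paper pads $n$ to $r^4$ and phrases bounds in terms of $r$, which avoids explicitly tracking the constant $C$ (note that beyond $C\ge 4$ for Phase~2, you also implicitly need $C$ large enough for Phase~3's blocks of length $3s_2+1$ to fit in a $k$-comparator), but the decomposition and the lemmas invoked are identical.
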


\begin{proof}
    As usual pad an array with constants to make $n=r^4$ for some integer $r$. Thus $k = O(r^2)$. 

    We follow the same strategy as in Section~\ref{sec:finish-sorting-network}.
    First we apply Lemma~\ref{lem:s-sorting} that uses three layers of network and results in a $s$-sorted array for $s = O(r^3)$. Then, we apply Lemma~\ref{lem:s-sorting} again to the blocks of size $O(r^3)$ to get a network of depth 2 that results in each block being $O(r^2)$-sorted. Then we apply one more layer to merge unsorted intervals in different blocks to get the array that is $O(r^2)$-sorted. Finally, we again split the array into blocks, this time of size $O(r^2)$ to complete the sorting using two layers. In total we use $3+2+1+2=8$ layers.
\end{proof}

\section{Conclusion}
\label{sec:conclusion}

The obvious open problems are to come up with explicit constructions of sorting networks and monotone circuits for majority of smaller depth. One specific problem, is to extend our $O(\log^{5/3} n)$ construction to sorting networks. The obstacle that we encountered is that there is no randomized construction of low depth sorting network that we can use as a start. Another interesting question is to extend our $O(\log^{5/3}n)$ construction to get a $\MAJ_k$-circuit for $\MAJ_n$ of depth $O(\log^{5/3}_k n)$. Such a construction can be used instead of $O(\log^2_k n)$-depth circuit in downward self-reduction to further improve the upper bound. Again the obvious obstacle is that it is not clear how to get a starting construction.


\begin{thebibliography}{10}

\bibitem{AjtaiKS83}
Mikl{\'{o}}s Ajtai, J{\'{a}}nos Koml{\'{o}}s, and Endre Szemer{\'{e}}di.
\newblock Sorting in c log n parallel steps.
\newblock {\em Comb.}, 3(1):1--19, 1983.

\bibitem{amano2017}
Kazuyuki Amano and Masafumi Yoshida.
\newblock Depth two $(n-2)$-majority circuits for $n$-majority.
\newblock Preprint, 2017.

\bibitem{BaddarB12}
S.W.A.H. Baddar and K.E. Batcher.
\newblock {\em Designing Sorting Networks: A New Paradigm}.
\newblock SpringerLink : B{\"u}cher. Springer New York, 2012.

\bibitem{Batcher68}
Kenneth~E. Batcher.
\newblock Sorting networks and their applications.
\newblock In {\em American Federation of Information Processing Societies:
  {AFIPS} Conference Proceedings: 1968 Spring Joint Computer Conference,
  Atlantic City, NJ, USA, 30 April - 2 May 1968}, volume~32 of {\em {AFIPS}
  Conference Proceedings}, pages 307--314. Thomson Book Company, Washington
  {D.C.}, 1968.

\bibitem{BeigelG90}
Richard Beigel and John Gill.
\newblock Sorting n objects with a k-sorter.
\newblock {\em {IEEE} Trans. Computers}, 39(5):714--716, 1990.

\bibitem{BellareR94}
Mihir Bellare and John Rompel.
\newblock Randomness-efficient oblivious sampling.
\newblock In {\em 35th Annual Symposium on Foundations of Computer Science,
  Santa Fe, New Mexico, USA, 20-22 November 1994}, pages 276--287. {IEEE}
  Computer Society, 1994.

\bibitem{BundalaZ14}
Daniel Bundala and Jakub Zavodny.
\newblock Optimal sorting networks.
\newblock In Adrian{-}Horia Dediu, Carlos Mart{\'{\i}}n{-}Vide, Jos{\'{e}}~Luis
  Sierra{-}Rodr{\'{\i}}guez, and Bianca Truthe, editors, {\em Language and
  Automata Theory and Applications - 8th International Conference, {LATA} 2014,
  Madrid, Spain, March 10-14, 2014. Proceedings}, volume 8370 of {\em Lecture
  Notes in Computer Science}, pages 236--247. Springer, 2014.

\bibitem{Chvatal92}
V.~Chv{\'a}tal.
\newblock Lecture notes on the new aks sorting network.
\newblock Technical report, 1992.

\bibitem{CohenDIKMRR13}
Gil Cohen, Ivan~Bjerre Damg{\aa}rd, Yuval Ishai, Jonas K{\"{o}}lker, Peter~Bro
  Miltersen, Ran Raz, and Ron~D. Rothblum.
\newblock Efficient multiparty protocols via log-depth threshold formulae -
  (extended abstract).
\newblock In Ran Canetti and Juan~A. Garay, editors, {\em Advances in
  Cryptology - {CRYPTO} 2013 - 33rd Annual Cryptology Conference, Santa
  Barbara, CA, USA, August 18-22, 2013. Proceedings, Part {II}}, volume 8043 of
  {\em Lecture Notes in Computer Science}, pages 185--202. Springer, 2013.

\bibitem{CypherS92}
Robert Cypher and Jorge L.~C. Sanz.
\newblock Cubesort: {A} parallel algorithm for sorting {N} data items with
  s-sorters.
\newblock {\em J. Algorithms}, 13(2):211--234, 1992.

\bibitem{Dobrokhotova-MaikovaKP23}
Natalia Dobrokhotova{-}Maikova, Alexander Kozachinskiy, and Vladimir~V.
  Podolskii.
\newblock Constant-depth sorting networks.
\newblock In Yael~Tauman Kalai, editor, {\em 14th Innovations in Theoretical
  Computer Science Conference, {ITCS} 2023, January 10-13, 2023, MIT,
  Cambridge, Massachusetts, {USA}}, volume 251 of {\em LIPIcs}, pages
  43:1--43:19. Schloss Dagstuhl - Leibniz-Zentrum f{\"{u}}r Informatik, 2023.

\bibitem{EngelsGMR20}
Christian Engels, Mohit Garg, Kazuhisa Makino, and Anup Rao.
\newblock On expressing majority as a majority of majorities.
\newblock {\em {SIAM} J. Discret. Math.}, 34(1):730--741, 2020.

\bibitem{gao1997sloping}
Qingshi Gao and Zhiyong Liu.
\newblock Sloping-and-shaking.
\newblock {\em Science in China Series E: Technological Sciences},
  40(3):225--234, 1997.

\bibitem{hoory2006monotone}
Shlomo Hoory, Avner Magen, and Toniann Pitassi.
\newblock Monotone circuits for the majority function.
\newblock In {\em Proceedings of the 9th international conference on
  Approximation Algorithms for Combinatorial Optimization Problems, and 10th
  international conference on Randomization and Computation}, pages 410--425,
  2006.

\bibitem{HrubesRRY19}
Pavel Hrubes, Sivaramakrishnan~Natarajan Ramamoorthy, Anup Rao, and Amir
  Yehudayoff.
\newblock Lower bounds on balancing sets and depth-2 threshold circuits.
\newblock In Christel Baier, Ioannis Chatzigiannakis, Paola Flocchini, and
  Stefano Leonardi, editors, {\em 46th International Colloquium on Automata,
  Languages, and Programming, {ICALP} 2019, July 9-12, 2019, Patras, Greece},
  volume 132 of {\em LIPIcs}, pages 72:1--72:14. Schloss Dagstuhl -
  Leibniz-Zentrum f{\"{u}}r Informatik, 2019.

\bibitem{Jukna12}
Stasys Jukna.
\newblock {\em Boolean Function Complexity - Advances and Frontiers}, volume~27
  of {\em Algorithms and combinatorics}.
\newblock Springer, 2012.

\bibitem{KahaleLMPSS95}
Nabil Kahal{\'{e}}, Frank~Thomson Leighton, Yuan Ma, C.~Greg Plaxton, Torsten
  Suel, and Endre Szemer{\'{e}}di.
\newblock Lower bounds for sorting networks.
\newblock In Frank~Thomson Leighton and Allan Borodin, editors, {\em
  Proceedings of the Twenty-Seventh Annual {ACM} Symposium on Theory of
  Computing, 29 May-1 June 1995, Las Vegas, Nevada, {USA}}, pages 437--446.
  {ACM}, 1995.

\bibitem{Knuth98}
Donald~Ervin Knuth.
\newblock {\em The art of computer programming, , Volume III, 2nd Edition}.
\newblock Addison-Wesley, 1998.

\bibitem{kombarov2017}
Yu.~A. Kombarov.
\newblock On depth two circuits for the majority function.
\newblock In {\em Proceedings of Problems in theoretical cybernetics}, pages
  129--132. Max Press, 2017.

\bibitem{KulikovP19}
Alexander~S. Kulikov and Vladimir~V. Podolskii.
\newblock Computing majority by constant depth majority circuits with low
  fan-in gates.
\newblock {\em Theory Comput. Syst.}, 63(5):956--986, 2019.

\bibitem{LeeB95}
De{-}Lei Lee and Kenneth~E. Batcher.
\newblock A multiway merge sorting network.
\newblock {\em {IEEE} Trans. Parallel Distributed Syst.}, 6(2):211--215, 1995.

\bibitem{Leighton85}
Frank~Thomson Leighton.
\newblock Tight bounds on the complexity of parallel sorting.
\newblock {\em {IEEE} Trans. Computers}, 34(4):344--354, 1985.

\bibitem{NakataniHAT89}
Toshio Nakatani, Shing{-}Tsaan Huang, Bruce~W. Arden, and Satish~K. Tripathi.
\newblock K-way bitonic sort.
\newblock {\em {IEEE} Trans. Computers}, 38(2):283--288, 1989.

\bibitem{Parberry92}
Ian Parberry.
\newblock The pairwise sorting network.
\newblock {\em Parallel Process. Lett.}, 2:205--211, 1992.

\bibitem{ParkerP89}
Bruce Parker and Ian Parberry.
\newblock Constructing sorting networks from k-sorters.
\newblock {\em Inf. Process. Lett.}, 33(3):157--162, 1989.

\bibitem{paterson1990improved}
Michael~S Paterson.
\newblock Improved sorting networks with o (logn) depth.
\newblock {\em Algorithmica}, 5(1):75--92, 1990.

\bibitem{Posobin17}
Gleb Posobin.
\newblock Computing majority with low-fan-in majority queries.
\newblock {\em CoRR}, abs/1711.10176, 2017.

\bibitem{seiferas2009sorting}
Joel Seiferas.
\newblock Sorting networks of logarithmic depth, further simplified.
\newblock {\em Algorithmica}, 53(3):374--384, 2009.

\bibitem{ShiYW14}
Feng Shi, Zhiyuan Yan, and Meghanad~D. Wagh.
\newblock An enhanced multiway sorting network based on n-sorters.
\newblock In {\em 2014 {IEEE} Global Conference on Signal and Information
  Processing, GlobalSIP 2014, Atlanta, GA, USA, December 3-5, 2014}, pages
  60--64. {IEEE}, 2014.

\bibitem{TsengL85}
S.~S. Tseng and Richard C.~T. Lee.
\newblock A parallel sorting scheme whose basic operation sorts\emph{N}
  elements.
\newblock {\em Int. J. Parallel Program.}, 14(6):455--467, 1985.

\bibitem{Vadhan12}
Salil~P. Vadhan.
\newblock Pseudorandomness.
\newblock {\em Found. Trends Theor. Comput. Sci.}, 7(1-3):1--336, 2012.

\bibitem{Valiant84}
Leslie~G. Valiant.
\newblock Short monotone formulae for the majority function.
\newblock {\em J. Algorithms}, 5(3):363--366, 1984.

\bibitem{viola2009approximate}
Emanuele Viola.
\newblock On approximate majority and probabilistic time.
\newblock {\em Computational Complexity}, 18:337--375, 2009.

\bibitem{Yao80}
Andrew~Chi{-}Chih Yao.
\newblock Bounds on selection networks.
\newblock {\em {SIAM} J. Comput.}, 9(3):566--582, 1980.

\bibitem{zhao1998efficient}
Lijun Zhao, Zhiyong Liu, and Qingshi Gao.
\newblock An efficient multiway merging algorithm.
\newblock {\em Science in China Series E: Technological Sciences},
  41(5):543--551, 1998.

\end{thebibliography}
\end{document}